\newtheorem{theorem}{Theorem}
\newtheorem{corollary}[theorem]{Corollary}
\newtheorem{definition}[theorem]{Definition}
\newtheorem{lemma}[theorem]{Lemma}
\newtheorem{problem}[theorem]{Problem}
\newtheorem{remark}[theorem]{Remark}
\newenvironment{proof}[1][Proof]{\emph{#1. }}{\hfill\ensuremath{\blacksquare}}
\begin{document}

\title{Bit Efficient Toeplitz Covariance Estimation}

\author{Hongwei~Xu and Zai~Yang
\thanks{The authors are with the School of Mathematics and Statistics, Xi'an Jiaotong University, Xi'an 710049, China (e-mails: bracy.xu@gmail.com, yangzai@xjtu.edu.cn).}}

\maketitle

\begin{abstract}
This paper addresses the challenge of Toeplitz covariance matrix estimation from partial entries of randomly quantized samples.
To balance the trade-offs among the number of samples, the number of entries observed per sample, and the data resolution, we propose a ruler-based quantized Toeplitz covariance estimator.
We derive non-asymptotic upper and lower bounds for the proposed estimator, and analyze the corresponding convergence rates.
Our results show that the estimator is near-optimal and imply that reducing data resolution within a certain range has limited impact on the estimation accuracy.
Numerical experiments are provided that validate our theoretical findings and show the effectiveness of the proposed estimator.
\end{abstract}

\begin{IEEEkeywords}
Quantization, dithering, covariance estimation, Toeplitz covariance matrix.
\end{IEEEkeywords}

\section{Introduction}
\IEEEPARstart{E}{stimating} the covariance matrix of a distribution $\mathcal{D}$ based on independent random vectors $\boldsymbol{x}^{(1)}, \ldots, \boldsymbol{x}^{(n)} \sim \mathcal{D}$ is a fundamental problem in statistics.
This problem is particularly prominent in applications such as signal processing \cite{krim1996two} and pattern recognition \cite{dahmen2000structured}, where accurate covariance estimation plays a crucial role.
Over the years, significant attention has been devoted to this challenge, resulting in numerous advancements, as evidenced by recent works such as \cite{maly2022new, ke2019user}.

In covariance estimation problems, it is often assumed that the covariance matrix $T$ is a \textit{symmetric Toeplitz matrix}, i.e., it satisfies $T_{a,b} = T_{c,d}$ whenever $|a-b| = |c-d|$.
This assumption naturally arises in stationary processes, where the covariance between measurements at two points depends only on the distance between them.
The Toeplitz structure is particularly prevalent in applications such as direction-of-arrival (DOA) estimation \cite{yang2018sparse}, radar image processing \cite{brookes2008optimising}, and other related program.

Classical covariance estimation faces practical hurdles in implementation.
In practical applications, such as DOA estimation, the estimation of Toeplitz covariance from full-digital, fully observed data requires the deployment of a high-resolution uniform linear array (ULA), in which sensor elements are equally spaced along a straight line.
This configuration faces two major challenges.
On the one hand, constructing a large-scale ULA demands a substantial number of array elements, which not only incurs significant hardware costs \cite{patole2017automotive} but also increases susceptibility to mutual coupling effects \cite{liao2012doa}, thereby degrading estimation accuracy.
To address these issues, sparse linear arrays (SLA) \cite{wu2016direction, li2025sparse} or minimum redundancy arrays (MRA) \cite{moffet1968minimum} are often employed, inevitably leading to observations that contain only a subset of the sample elements.
To explicitly index these sparsely observed positions, we introduce the concept of \textit{ruler} \cite{eldar2020sample}.
On the other hand, in antenna array systems, high-resolution, high-sampling-rate analog-to-digital converters (ADCs) are costly and power-hungry \cite{walden2002analog, hu2025model}.
Employing low-resolution data can substantially enhance cost-effectiveness and energy efficiency while maintaining satisfactory performance, thereby reducing both acquisition precision and sampling overhead.

Motivated by these considerations, we study Toeplitz covariance estimation under \textit{quantized, partially observed data}.
In this paradigm, reducing the number of observed entries per sample or lowering the quantization resolution diminishes the information contained in each observation, and thereby increasing the number of samples needed to achieve a desired estimation accuracy.
Consequently, our objective is to approximate $T$ using as few bits of sample information as possible while satisfying prescribed accuracy constraints, which is particularly critical in large-scale sensing or low-power systems such as radar \cite{patole2017automotive} and wireless communications \cite{studer2016quantized}.
Unlike general covariance estimation, the focus here is on the estimator that leverages only a subset of the entries from each quantized coarse sample $\dot{\boldsymbol{x}}^{(l)}$, see Fig.~\ref{RQTCE} in the main context.

When considering the number of bits required for the estimator $\hat{T}$ to achieve a given tolerance $\epsilon$, i.e.,
\begin{equation}
    \label{Tolerace}
    \|T-\hat{T}\|_2 \leq \epsilon,\qquad \text{or}\qquad \|T-\hat{T}\|_2 \leq \epsilon \|T\|_2,
\end{equation}
the following aspects are typically of interest:
\begin{itemize}
    \item \textbf{Vector Sample Complexity (VSC)}: The number of coarse samples $\dot{\boldsymbol{x}}^{(1)}, \ldots, \dot{\boldsymbol{x}}^{(n)}$ required by the estimator $\hat{T}$ to approximate $T$.
    
    \item \textbf{Entry Sample Complexity (ESC)}: The number of entries observed in each coarse sample.
    
    \item \textbf{Resolution}: Higher resolution is achieved with more precise entries, while lower resolution results from less accurate samples. It is typically determined by the quantization level $\Delta$, where a larger $\Delta$ corresponds to lower resolution.
\end{itemize}
The goal of achieving a covariance estimation that meets the requirement (\ref{Tolerace}) leads to trade-offs among these three factors, as simultaneously minimizing all three is typically infeasible.

The trade-offs between VSC and ESC have been extensively analyzed in \cite{eldar2020sample}, albeit without considering resolution.
The results demonstrate that, even with a relatively small ESC, (\ref{Tolerace}) can still be achieved by increasing the number of observed samples (i.e., VSC).
On the contrary, in general covariance estimation settings, studies have shown that quantization, while reducing resolution, only slightly worsens the leading factors of the estimation error and does not affect the scaling order \cite{chen2023quantizing, chen2023quantized}.

However, to the best of our knowledge, no existing studies simultaneously consider the three factors of VSC, ESC, and resolution.
The aim of this paper is to propose a novel estimator $\hat{T}$ that enables reasonable trade-offs among these three factors, thus advancing the understanding of covariance estimation in Toeplitz settings.

\subsection{Relations to Prior Art}
Estimating the covariance matrix by acquiring all the entries of each sample has been a significant topic in statistics \cite{barton1997structured}.
Classical methods, such as maximum likelihood estimation \cite{burg1982estimation} and projected covariance estimation \cite{roberts1987digital}, have been extensively studied.
In addition, methods that consider only a subset of the entries in each sample have also been explored \cite{gonen2016subspace}.
The introduction of the Toeplitz property into these methods has been shown to significantly enhance estimation performance \cite{miller1987role}.
This improvement arises from the inherent translational invariance of the covariance matrix in stationary processes, which effectively reduces the degrees of freedom required for estimation in practical applications \cite{cai2013optimal}.
Leveraging the Toeplitz property not only improves estimation accuracy but also reduces computational complexity.

To address the Toeplitz covariance estimation problem, sparse recovery methods built on the Vandermonde decomposition of $T$ have been employed \cite{gonen2016subspace, yang2014discretization}.
In particular, several well-known algorithms, such as MUSIC and ESPRIT methods, have been studied under the low-rank setting.
Furthermore, under the strong separation assumption, sparse Fourier transform methods can be applied to solve this problem \cite{eldar2020sample, chen2016fourier}.
In radar and array signal processing, a coherent line of algorithmic methods has been developed for Toeplitz-structured covariance estimation, including projection-based, majorization-minimization-based maximum likelihood estimation (MLE), and expectation-maximizatio (EM) approaches \cite{du2020toeplitz, aubry2024advanced, aubry2021structured, yang2023robust}.
While prior works focus on full-digital (unquantized) covariance estimation and typically rely on refined, high-complexity algorithms to improve performance, this paper investigates the fundamental limits of a simple estimator constructed from quantized measurements by leveraging appropriately designed dithering to ensure desirable statistical properties.

Exploiting the structural properties of the Toeplitz covariance matrix, it is often sufficient to consider only a subset of entries from each sample.
The indices of this subset are referred to as the \textit{ruler} (see Definition~\ref{Ruler}), which is analogous to the sparse linear array in array signal processing.
Consequently, the trade-off between VSC and ESC becomes a critical consideration.
To address this, numerous studies have explored sparse ruler-based methods, which have found widespread applications in signal processing \cite{pillai1985new, romero2015compressive}.
Some results have been established for the ruler-based estimator $\tilde{T}$.
Preliminary analyses of the estimation error $\|T-\tilde{T}\|_2$ are presented in \cite{wu2017toeplitz}.
Further advancements, including non-asymptotic bounds in the sense of the Frobenius norm, have been provided in \cite{qiao2017gridless}.
More recently, non-asymptotic bounds in terms of operator norm have been established in \cite{yang2023nonasymptotic}, further refining the estimation error analysis in Toeplitz matrix recovery.
Groundbreaking results were achieved in \cite{eldar2020sample}, where the relationship between ESC and VSC was comprehensively analyzed in different cases, leading to more general conclusions.

Besides covariance estimation, our research is also related to quantization.
It plays a critical role in reducing the resolution of data, which is crucial for reducing the cost of data processing \cite{chen2023quantizing}.
For instance, in multiple-input multiple-output (MIMO) communication systems, the frequent transmission of high-resolution data imposes significant processing costs.
By quantizing the transmitted data to lower resolutions, these costs can be substantially reduced \cite{mo2018limited}.
Motivated by practical needs, quantization theory has been extensively explored, including 1-bit quantization \cite{dirksen2022covariance, dirksen2021non, jacques2013robust, plan2013one} and multi-bit quantization \cite{chen2023quantizing, jacques2017time, jung2021quantized}. 

Quantization refers to the process of mapping continuous inputs to discrete form \cite{gray1998quantization}.
Various quantization approaches have been proposed to address different problem domains.
For example, \cite{dirksen2022covariance} introduces two methods for 1-bit quantization, which map continuous signals to $\pm 1$.
The first method, dither-free quantization, directly maps data $x$ to $\mathrm{sign}(x)$,
while the second method, dithered quantization, involves adding a uniform dither to the data before quantization.
In addition, \cite{chen2023quantizing} considers a multi-bit dithered quantization approach and analyzes the properties of both uniform and triangular dithers.
Notably, these quantization techniques are memoryless, meaning that the quantization of each data is independent to the others.
In addition to these methods, other quantization approaches have also been studied \cite{dirksen2019quantized}.
The choice of quantization method often depends on the specific problem to be addressed, with different techniques tailored to achieve optimal results in various scenarios.

Research on quantized covariance estimation remains relatively limited.
The problem of 1-bit quantized covariance estimation is explored in \cite{dirksen2022covariance}, where non-asymptotic bounds on the operator norm are analyzed.
The results in \cite{dirksen2022covariance} were shown to be effective for estimating covariance matrices in the masked setting with a known sparse pattern, while the case of an unknown sparse pattern was subsequently addressed in \cite{chen2023high}.
The framework of \cite{dirksen2022covariance} was further extended in \cite{maly2022new}.
Multi-bit quantized covariance estimation is investigated in \cite{chen2023quantizing}, particularly for data with heavy-tailed distributions.
The study derives non-asymptotic bounds of the error on the operator norm and demonstrates that the proposed method achieves near-optimality.
These findings highlight the effectiveness of multi-bit quantization techniques in addressing the challenges posed by heavy-tailed distributions, further enriching the theoretical landscape of quantized covariance estimation.

Recently, further progress has been made in quantized covariance estimation.
In \cite{lu20241}, a 1.5-bit quantization framework was proposed, leading to an arcsin-law-type covariance estimator.
Building upon the results of \cite{dirksen2022covariance}, \cite{dirksen2024tuning} introduced a tuning-free one-bit covariance estimation method, which employs a data-driven approach to automatically determine quantization parameters from the observed data, thereby enabling recovery of the covariance structure without prior knowledge of the true covariance norm.
Moreover, \cite{chen2025parameter} proposed a novel covariance estimation method based on two-bit quantized data, which, in parallel to \cite{dirksen2022covariance}, yields another parameter-free covariance estimator and provides a theoretical analysis of its approximation performance.

\subsection{Our Contributions}
In this paper, we address the problem of quantized Toeplitz covariance estimation under partial observation, which has not been systematically addressed in the literature.
The key contribution is a novel estimator that explicitly exploits the Toeplitz structure under quantization and partial observation, enabling trade-offs among ESC, VSC, and resolution.
We further establish non-asymptotic performance bounds, thereby providing rigorous theoretical guarantees and revealing new insights into the interplay between sample complexity (i.e., VSC and ESC) and quantization resolution.

While Toeplitz covariance estimators with partial observations have been studied in \cite{eldar2020sample} and quantized covariance estimation has been systematically explored in \cite{chen2023quantized, chen2023quantizing, chen2025parameter}, existing approaches typically treat these two aspects separately.
Instead of a simple extension of previous works, our method integrates them into a unified framework, following the standard technique of reducing the operator norm error to a uniform bound on the associated spectral density function.
Specifically, \cite{chen2023quantized} characterizes quantization and estimation errors separately using a random matrix framework, obtaining general results for heavy-tailed data, but without leveraging the structural advantages of Toeplitz matrices, limiting applicability to fully observed settings.
Conversely, \cite{eldar2020sample} exploits Toeplitz structure under partial observation but relies on Gaussian assumptions, which cannot accommodate the loss of Gaussianity in the data induced by quantization, making it unsuitable for quantized data.
By jointly addressing both challenges, our estimator fills this gap and provides a unified solution applicable to scenarios where quantization and partial observation occur simultaneously.

\begin{figure}[!t]
    \centering
    \includegraphics[scale=0.36]{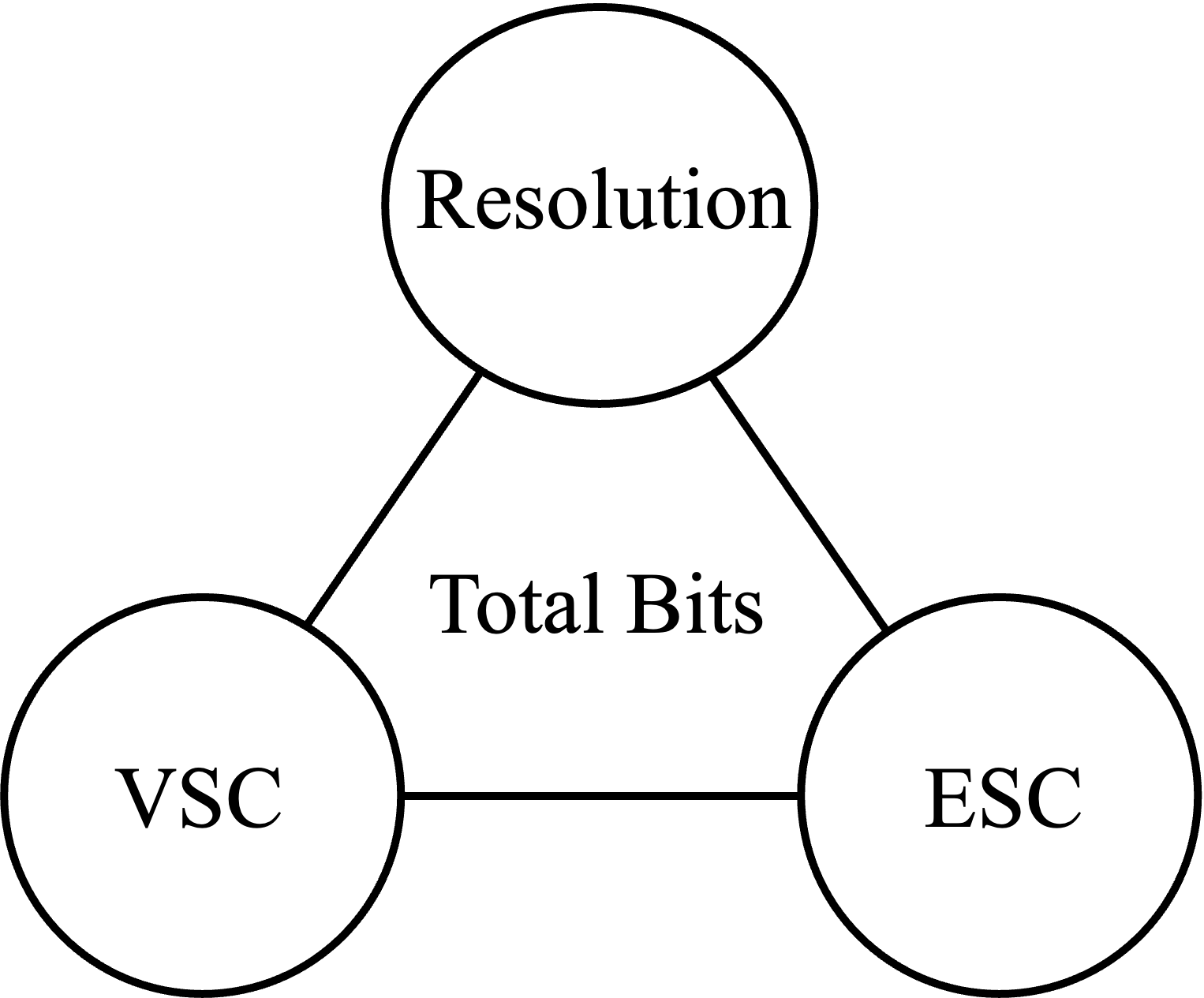}
    \caption{The ``Impossible Triangle" in Toeplitz covariance estimation: for a given tolerance $\epsilon$, to achieve $\|T - \hat{T}\|_2 \leq \epsilon \|T\|_2$, it is impossible to simultaneously minimize VSC, ESC and resolution}
    \label{ImpossibleT}
\end{figure}

Our main contributions in this paper are summarized as follows:

\emph{1) Realized trade-offs between VSC, ESC, and resolution.}
We propose an estimator $\hat{T}$, modified from the ruler-based estimator, that achieves a balanced trade-off among VSC, ESC and resolution.
Specifically, we demonstrate that to achieve $\|T - \hat{T}\|_2 \leq \epsilon \|T\|_2$, it is impossible to simultaneously minimize VSC, ESC, and resolution (e.g., Fig.~\ref{ImpossibleT}).
Furthermore, our results show that decreasing any one (or two) of these quantities can be compensated by increasing the remaining quantity.
We provide an in-depth analysis of the constraints over these three factors.

\emph{2) Non-asymptotic bounds for the estimator $\hat{T}$.}
We derive non-asymptotic bounds that guarantee the performance of the proposed estimator $\hat{T}$.
Our results demonstrate that even when using quantized coarse sample data and observing only a subset of the entries in each sample, the estimator still achieves satisfactory performance.
This highlights the effectiveness of $\hat{T}$ and shows that highly accurate covariance estimation can be achieved using only a limited amount of sample information.

\emph{3) Near-optimal convergence rates.}
We show that quantization only affects the rate of convergence slightly.
Compared to the unquantized results in \cite{eldar2020sample}, our convergence rates differ solely in the coefficient.
This indicates that reducing the sample resolution within a reasonable range has little impact on the estimation accuracy.
Notably, our results degenerate into those in \cite{eldar2020sample} in the case when the data is unquantized.

Our results have significant practical value, offering substantial reductions in data-processing costs and providing insight into the trade-offs and efficiencies inherent in quantized Toeplitz covariance estimation.

\subsection{Notations}

The set of real numbers is denoted by $\mathbb{R}$.
For a positive integer $d$, let $[d] = \{0, 1, \ldots, d-1\}$.
Boldface letters are used to represent vectors, where the $j$-th entry of a vector $\boldsymbol{x}$ is denoted by $x_j$.
Matrices are denoted by regular letters, with $A_{j,k}$ representing the $(j,k)$-th entry of a matrix $A$. For a matrix $A$, we consider its trace $\text{tr}(A)$, rank $\text{rank}(A)$, operator (spectral) norm $\|A\|_2$, Frobenius norm $\|A\|_F$, and maximum norm $\|A\|_{\infty}$.

For any vector $\boldsymbol{a} = (a_0,\ldots, a_{d-1})^T \in \mathbb{R}^d$, let $\text{Toep}(\boldsymbol{a}) \in \mathbb{R}^{d \times d}$ denote the symmetric Toeplitz matrix with entries $\text{Toep}(\boldsymbol{a})_{j,k} = a_{|j-k|}$. We denote $\text{avg}(A)$ as the Toeplitz matrix obtained by averaging the diagonals of $A$.
For a matrix $A \in \mathbb{R}^{d \times d}$ and a set $R \subset [d]$, let $T_R \in \mathbb{R}^{|R| \times |R|}$ represent the principal submatrix of $A$, with rows and columns indexed by $R$, where $|R|$ is the cardinality of $R$.

For a random variable $X$, its expectation is denoted by $\mathbb{E}(X)$. The sub-exponential norm $\|X\|_{\psi_1}$ and sub-Gaussian norm $\|X\|_{\psi_2}$ are defined as
\begin{equation}
    \|X\|_{\psi_1} = \inf \{ t>0 : \mathbb{E} \exp (|X|/t) \leq 2 \}
\end{equation}
and
\begin{equation}
    \|X\|_{\psi_2} = \inf \{ t>0 : \mathbb{E} \exp (X^2/t^2) \leq 2 \},
\end{equation}
respectively. For real numbers $a < b$, we write $X \sim \mathcal{U}([a,b])$ if and only if $X$ follows a uniform distribution over the interval $[a,b]$.

\subsection{Organization}
The rest of the paper is organized as follows.
In Section~\ref{sec:perl}, we revisit the preliminaries that will be used throughout the article.
Section~\ref{sec:rule} then formally defines the ruler-based quantized Toeplitz covariance estimation problem and introduces an unbiased estimator.
A non-asymptotic performance analysis of the proposed estimator is then presented in Section~\ref{sec:nona}.
In Section~\ref{sec:fini}, we develop the finite-bit estimator and provide its non-asymptotic performance guarantees.
Section~\ref{sec:lowr} explores two special cases, i.e., the low-rank and band-limited scenarios.
Numerical experiments validating the theoretical analysis are reported in Section~\ref{sec:nume},
and detailed proofs of the theoretical results are provided in the appendices.

\section{Preliminaries} \label{sec:perl}
In this section, we introduce the key concepts related to ruler-based Toeplitz covariance estimation and quantization.
These concepts serve as foundational knowledge that will be applied in subsequent sections.

\subsection{Ruler-based Toeplitz Covariance Estimation}
Traditional covariance estimation addresses the problem of estimating the covariance matrix $T$ from independent and identically distributed (i.i.d.) samples drawn from a $d$-dimensional normal distribution, where $T \in \mathbb{R}^{d \times d}$ is a positive semidefinite matrix, i.e., $\boldsymbol{x}^{(1)}, \ldots, \boldsymbol{x}^{(n)} \sim \mathcal{N}(0, T)$, given query access to the sample entries.
The objective is to obtain an estimator $\tilde{T}$ that satisfies, with probability at least $1 - \delta$,
\begin{equation}
    \|T-\hat{T}\|_2 \leq \epsilon,\qquad \text{or}\qquad \|T-\hat{T}\|_2 \leq \epsilon \|T\|_2,
\end{equation}
where $\|\cdot\|_2$ denotes the operator norm.

Moreover, when $T$ is a symmetric Toeplitz matrix, it can be fully characterized by a vector $\boldsymbol{a}$, where its element $a_s$ represents the value on the $(s+1)$-th diagonal of $T$.
Specifically, $a_s$ corresponds to the covariance of samples separated by $s$ indices.
Consequently, it suffices to extract a subset of entries $R \subset [d]$ from each sample $\boldsymbol{x} \sim \mathcal{N}(0, T)$ for distances $s = 0, \ldots, d-1$.
This observation implies that covariance estimation can be performed more efficiently under the Toeplitz structure.

In Toeplitz covariance estimation, the goal is to approximate $T$ within a given tolerance $\epsilon$ while only observing a subset of entries from each sample and using as few samples as possible.
To formalize this approach, the concept of a \textit{ruler} is introduced \cite{eldar2020sample}.

\begin{definition}
        \label{Ruler}
        A subset $R \subset [d]$ is called a \textit{ruler} if for all $s = 0, \ldots, d-1$, there exist $j, k \in R$ such that $|j - k| = s$.
        We further let $R_s$ denote the set of ordered pairs $(j, k)$ with distance $s$.
\end{definition}

We say $R$ is \textit{sparse} if $|R| < d$.
For example, when $d = 10$, a sparse ruler can be $R = \{1, 2, 5, 8, 10\}$.
Sparse rulers are also called redundancy sparse arrays or sparse linear arrays in the literature \cite{wu2016direction}.
Note that sparse rulers are not unique.
In general, a sparse ruler must satisfy $\sqrt{d} \leq |R| < d$ \cite{leech1956representation}.
For any $d$, we introduce a special kind of rulers as follows.

\begin{definition}
        \label{SpecialRuler}
        For $\alpha \in [1/2, 1]$, let $R_{\alpha}$ denote a ruler defined as
        \begin{equation}
                R_{\alpha} = R_{\alpha}^{(1)} \cup R_{\alpha}^{(2)},
        \end{equation}
        where
        \begin{equation}
                R_{\alpha}^{(1)} = \{ 1, 2, \ldots, d^{\alpha} \}, \qquad
                R_{\alpha}^{(2)} = \{ d, d - d^{1-\alpha}, \ldots, d - (d^{\alpha} - 1)d^{1-\alpha} \}.
        \end{equation}
\end{definition}

For simplicity, we assume that both $d^{\alpha}$ and $d^{1-\alpha}$ are integers.
In practical applications, rounding non-integer values to the nearest integer is sufficient.
For example, when $d = 16$, the sparse ruler $R_{1/2} = \{1, 2, 3, 4, 8, 12, 16\}$.
As will be demonstrated in subsequent sections, this type of ruler possesses several desirable properties, making it a compelling choice in scenarios where reducing the number of entries observed in each sample is critical.
Notably, when $\alpha = 1$, i.e., $R = [d]$, we refer to it as the \textit{full ruler}, which corresponds to the classic case.

For brevity, we then introduce the concept of the \textit{coverage coefficient} here.

\begin{definition}
        For any ruler $R \subset [d]$, the \textit{coverage coefficient} is defined as
        \begin{equation}
                \phi(R) := \sum_{s=1}^{d-1} \frac{1}{|R_s|}.
        \end{equation}
        where $R_s$ is as specified in Definition~\ref{Ruler}.
\end{definition}

As the distance $s$ is represented more frequently in the ruler $R$, the value of $\phi(R)$ decreases.
The coverage coefficient plays a crucial role in the subsequent proofs.
Intuitively, it can be observed that
\begin{equation}
        \label{RRRRRuler}
        \phi(R_1) = O(\log d), \qquad \phi(R_{1/2}) = O(d).
\end{equation}
In fact, the value of $\phi(R)$ depends on several factors, including the sample dimension $d$, the cardinality of $R$ (i.e., ESC), and the positional structure of the indices in $R$.

A ruler-based Toeplitz covariance estimator can be defined as $\tilde{T} = \text{Toep}(\tilde{\boldsymbol{a}})$, where
\begin{equation}
        \label{RulerEst}
        \tilde{a}_s := \dfrac{1}{n |R_s|} \sum_{l = 1}^n \sum_{(j, k) \in R_s} x_j^{(l)} x_k^{(l)}.
\end{equation}
Here, $R_s$ is defined in Definition~\ref{Ruler}, and $|R_s|$ denotes its cardinality.
In particular, if $R = [d]$, the estimator simplifies to
\begin{equation}
        \tilde{T} = \text{avg} \left(\dfrac{1}{n} \sum_{l = 1}^n \boldsymbol{x}^{(l)} \boldsymbol{x}^{(l)T}\right).
\end{equation}
The non-asymptotic bounds of the estimator $\tilde{T}$ have been extensively studied in \cite{eldar2020sample, qiao2017gridless}.

\subsection{Dithered Quantization Scheme}
Quantization refers to the process of mapping continuous inputs into a discrete form \cite{gray1998quantization}.
This technique is particularly useful for processing continuous data, as it enables the conversion of data into a discrete form with low resolution, thereby significantly reducing the costs associated with data processing and transmission \cite{shlezinger2020uveqfed, zhang2017zipml}.
Consequently, quantization theory has received considerable attention, especially in the fields of statistical learning and estimation \cite{chen2023high}.
A key challenge in quantization lies in navigating the trade-off between achieving high approximation accuracy and minimizing the processing costs.

A common quantization scheme is the \textit{dithered quantization scheme}, which introduces an appropriate random dither to the signal before quantization.

\begin{definition}
        \label{Def_dithered_quantization}
        Let $\boldsymbol{x} \in \mathbb{R}^d$ denote the signal and $\boldsymbol{\tau}\in \mathbb{R}^d$ be a random dither (independent of $\boldsymbol{x}$) with i.i.d. entries sampled from some distribution.
        The dithered quantization is defined as
        \begin{equation}
                \dot{\boldsymbol{x}} = \mathcal{Q}_\Delta(\boldsymbol{x} + \boldsymbol{\tau}),
        \end{equation}
        where
        \begin{equation}
                \mathcal{Q}_\Delta(x) := \Delta \left( \left\lfloor \frac{x}{\Delta} \right\rfloor + \frac{1}{2} \right) \in \Delta \cdot \left( \mathbb{Z} + \frac{1}{2} \right),
        \end{equation}
        and $\mathcal{Q}_\Delta(\boldsymbol{x})$ denotes entry-wise quantization of $\boldsymbol{x}$.
\end{definition}

The dither in Definition~\ref{Def_dithered_quantization} is essential.
Consider two distinct sub-Gaussian random variables $X$ and $Y$ with probability density functions
\begin{equation}
        f_X(x) = 1 - |x|, \quad |x| \leq 1,
        \qquad \text{and}
        \qquad
        f_Y(y) = |y|, \quad |y| \leq 1.
\end{equation}
It is straightforward to verify that $\mathrm{Var}(X) = \frac{1}{6} \neq \frac{1}{2} = \mathrm{Var}(Y)$.
However, under the dither-free uniform quantizer $\mathcal{Q}_1(\cdot)$, we have
\begin{equation}
        \mathbb{P}\left(\mathcal{Q}_1(X) = \dfrac{1}{2}\right) = \mathbb{P}\left(X \geq 0\right) = \dfrac{1}{2}, \qquad \mathbb{P}\left(\mathcal{Q}_1(X) = -\dfrac{1}{2}\right) = \mathbb{P}\left(X < 0\right) = \dfrac{1}{2},
\end{equation}
and hence $\mathrm{Var}(\mathcal{Q}_1(X)) = \frac{1}{4}$.
The same calculation shows $\mathrm{Var}(\mathcal{Q}_1(y)) = \frac{1}{4}$.
This example illustrates that, without dithering, different original variances can collapse to the same quantized variance, making it impossible to recover the true variance from the quantized data.

Within the framework defined in Definition~\ref{Def_dithered_quantization}, several special designations are defined as follows.

\begin{itemize}
        \item{For a given quantization level $\Delta > 0$, the \textit{uniform dither} $\boldsymbol{\tau} = [\tau_i]$ is defined as 
        \begin{equation}
                \tau_i \sim \mathcal{U} \left( \left[ -\frac{\Delta}{2}, \frac{\Delta}{2} \right] \right).
        \end{equation}}
        
        \item{For a given quantization level $\Delta > 0$, the \textit{triangular dither} $\boldsymbol{\tau} = [\tau_i]$ is defined as
        \begin{equation}
                \label{TriangularDither}
                \tau_i \sim \mathcal{U} \left( \left[ -\frac{\Delta}{2}, \frac{\Delta}{2} \right] \right) + \mathcal{U} \left( \left[ -\frac{\Delta}{2}, \frac{\Delta}{2} \right] \right).
        \end{equation}}

        \item{The \textit{quantization error} is defined as the difference between the input and output of the quantizer, i.e.,
        \begin{equation}
                \boldsymbol{\omega} := \dot{\boldsymbol{x}} - (\boldsymbol{x} + \boldsymbol{\tau}).
        \end{equation}
    }

    \item{The \textit{quantization noise} is defined as the overall difference between original input $\boldsymbol{x}$ and final output, i.e.,
    \begin{equation}
        \boldsymbol{\xi} := \dot{\boldsymbol{x}} - \boldsymbol{x}
    \end{equation}
    }
\end{itemize}
In most cases, the distribution of quantization noise $\boldsymbol{\xi}$ is unknown, making it challenging to analyze the effects of quantization.
Fortunately, this issue can be partly addressed by a fundamental result from \cite{gray1993dithered}, stated as follows.

\begin{lemma}
        \label{LemmaOfQuantization}
        Let $\boldsymbol{x} = [x_i]$ be the input signal, and let $\boldsymbol{\tau} = [\tau_i]$ be the random dither, where the entries of $\boldsymbol{\tau}$ are i.i.d. copies of a random variable $Y$. We denote the complex unit by $\text{i}$. Then:
        \begin{itemize}
                \item[(a)] \textit{Quantization Error}:  
                Let $\boldsymbol{\omega} = \dot{\boldsymbol{x}} - (\boldsymbol{x} + \boldsymbol{\tau}) = [\omega_i]$ be the quantization error. If $f(u) := \mathbb{E}(\exp(\text{i} u Y))$ satisfies $f\left( \frac{2 \pi l}{\Delta} \right) = 0$ for all non-zero integers $l$, then:
                \begin{enumerate}
                        \item $x_i$ and $\omega_j$ are independent for all $i, j \in [d]$,
                        \item $\{\omega_j : j \in [d]\}$ are i.i.d. with distribution $\mathcal{U} \left( \left[ -\frac{\Delta}{2}, \frac{\Delta}{2} \right] \right)$.
                \end{enumerate}

                \item[(b)] \textit{Quantization Noise}:  
                Let $\boldsymbol{\xi} = \dot{\boldsymbol{x}} - \boldsymbol{x} = [\xi_i]$ denote the quantization noise. Assume $Z \sim \mathcal{U} \left( \left[ -\frac{\Delta}{2}, \frac{\Delta}{2} \right] \right)$ is independent of $Y$. Let $g(u) := \mathbb{E}\left(\exp(\text{i} u Y)) \mathbb{E}(\exp(\text{i} u Z)\right)$. For a given positive integer $p$, if the $p$-th order derivative $g^{(p)}(u)$ satisfies 
                \begin{equation}
                        g^{(p)}\left( \frac{2 \pi l}{\Delta} \right) = 0,
                \end{equation}
                for all non-zero integers $l$. Then the $p$-th conditional moment of $\xi_i$ does not depend on $\boldsymbol{x}$. Specifically,
                \begin{equation}
                        \mathbb{E}[\xi_i^p | \boldsymbol{x}] = \mathbb{E}(Y + Z)^p.
                \end{equation}
        \end{itemize}
\end{lemma}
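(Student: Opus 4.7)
The plan is to exploit the $\Delta$-periodicity of the quantization-error map $Q(u) := \mathcal{Q}_\Delta(u) - u$, which equals $\Delta/2 - u$ on the fundamental cell $[0, \Delta)$ and extends $\Delta$-periodically, and to expand $\exp(\text{i} v Q(u))$ as a Fourier series in $u$. A short calculation on $[0, \Delta]$ gives
\begin{equation*}
    \exp(\text{i} v Q(u)) = \sum_{l \in \mathbb{Z}} c_l(v) \, \exp\!\left(\text{i} \tfrac{2 \pi l u}{\Delta}\right), \qquad c_l(v) = \phi_Z\!\left(v + \tfrac{2 \pi l}{\Delta}\right),
\end{equation*}
where $\phi_Z$ denotes the characteristic function of $Z \sim \mathcal{U}([-\Delta/2, \Delta/2])$ and, in particular, $c_0(v) = \phi_Z(v)$. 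This single identity drives both parts of the lemma: the hypotheses on $f$ and $g$ are tailored precisely to annihilate the non-zero Fourier modes in the sums that appear below.

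For part (a), I would compute the joint characteristic function of the pair $(\boldsymbol{x}, \omega_i) = (\boldsymbol{x}, Q(x_i + \tau_i))$. Inserting the Fourier expansion and using independence of $\boldsymbol{x}$ and $\tau_i$ to factor the $\tau_i$-expectation, I obtain
\begin{equation*}
    \mathbb{E}\!\left[\exp(\text{i} \langle \boldsymbol{u}, \boldsymbol{x}\rangle + \text{i} v \omega_i)\right] = \sum_{l} c_l(v) \, \mathbb{E}\!\left[\exp\!\left(\text{i} \langle \boldsymbol{u} + \tfrac{2\pi l}{\Delta} \boldsymbol{e}_i, \boldsymbol{x} \rangle\right)\right] f\!\left(\tfrac{2\pi l}{\Delta}\right).
\end{equation*}
The hypothesis $f(2 \pi l / \Delta) = 0$ for $l \neq 0$ collapses the sum to its $l = 0$ term, which factors as $\phi_Z(v) \cdot \mathbb{E}[\exp(\text{i} \langle \boldsymbol{u}, \boldsymbol{x}\rangle)]$. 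This simultaneously delivers independence of $\omega_i$ from the entire signal $\boldsymbol{x}$ (hence from every $x_j$) and identifies $\omega_i \sim \mathcal{U}([-\Delta/2, \Delta/2])$. Mutual independence of the family $\{\omega_j\}$ follows from the same argument applied to the joint characteristic function of $(\boldsymbol{x}, \omega_{j_1}, \ldots, \omega_{j_m})$; because the $\tau_j$ are i.i.d., every multi-index other than $\boldsymbol{l} = \boldsymbol{0}$ is killed by at least one factor of the form $f(2\pi l_k/\Delta)$.

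For part (b), the same Fourier expansion applied to $\exp(\text{i} v \xi_i) = \exp(\text{i} v Q(x_i + \tau_i)) \cdot \exp(\text{i} v \tau_i)$, followed by taking the conditional expectation in $\tau_i$ given $\boldsymbol{x}$, yields
\begin{equation*}
    \mathbb{E}[\exp(\text{i} v \xi_i) \mid \boldsymbol{x}] = \sum_l g\!\left(v + \tfrac{2 \pi l}{\Delta}\right) \exp\!\left(\text{i} \tfrac{2 \pi l x_i}{\Delta}\right),
\end{equation*}
where I have used the identity $c_l(v) \cdot f(v + 2\pi l/\Delta) = \phi_Z(v + 2\pi l/\Delta)\, f(v + 2\pi l/\Delta) = g(v + 2\pi l/\Delta)$. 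Differentiating $p$ times in $v$ and evaluating at $v = 0$ gives
\begin{equation*}
    \text{i}^p\, \mathbb{E}[\xi_i^p \mid \boldsymbol{x}] = g^{(p)}(0) + \sum_{l \neq 0} g^{(p)}\!\left(\tfrac{2 \pi l}{\Delta}\right) \exp\!\left(\text{i} \tfrac{2 \pi l x_i}{\Delta}\right),
\end{equation*}
and the hypothesis $g^{(p)}(2\pi l/\Delta) = 0$ for $l \neq 0$ eliminates the entire $x_i$-dependent tail, leaving $g^{(p)}(0) = \text{i}^p\, \mathbb{E}(Y+Z)^p$ as claimed.

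The main obstacle will be justifying the termwise operations in the displays above. Because $Q$ is only piecewise linear, the Fourier coefficients $c_l(v)$ decay only like $1/l$, so the series is not absolutely convergent; neither the swap of sum with expectation nor the $p$-fold differentiation under the sum is automatic. I would handle this with a truncation-and-limit argument: approximate $\exp(\text{i} v Q(\cdot))$ by its Fourier partial sums and pass to the limit using a Parseval-type identity for the $\Delta$-periodic function against the law of $\tau_i$, and invoke the smoothness implicit in the existence of $g^{(p)}$, together with a uniform tail bound on $g^{(p)}(v + 2\pi l/\Delta)$ for $v$ in a neighborhood of $0$, to differentiate the series termwise. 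Once these analytic interchanges are in place, the rest reduces to the algebraic identities displayed above.
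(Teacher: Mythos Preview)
The paper does not actually prove this lemma; it is quoted as ``a fundamental result from \cite{gray1993dithered}'' and used as a black box throughout. Your Fourier/characteristic-function argument is exactly the classical derivation from that literature (going back to Schuchman's condition): expand the $\Delta$-periodic function $u\mapsto e^{\text{i}vQ(u)}$ as a Fourier series, push the expectation through, and let the hypotheses on $f$ and $g^{(p)}$ kill the nonzero harmonics. So there is no proof in the paper to compare against; you have simply supplied the standard one.

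One minor correction: a direct computation of the Fourier coefficient over $[0,\Delta)$ gives
\[
c_l(v)=(-1)^l\,\phi_Z\!\left(v+\tfrac{2\pi l}{\Delta}\right),
\]
not $\phi_Z(v+2\pi l/\Delta)$; the factor $(-1)^l=e^{-\text{i}\pi l}$ comes from the half-cell shift in $Q(u)=\Delta/2-u$. This is harmless for the argument—the $l\neq 0$ terms are annihilated in both parts regardless, and $c_0(v)=\phi_Z(v)$ is unchanged—but the display and the identity $c_l(v)f(v+2\pi l/\Delta)=g(v+2\pi l/\Delta)$ in part~(b) should carry the extra sign. Your caution about justifying the interchange of sum with expectation and the termwise $p$-fold differentiation is well placed and is indeed the only analytic subtlety in the argument.
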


It can be verified that both uniform dither and triangular dither satisfy the conditions of (\emph{a}) in Lemma~\ref{LemmaOfQuantization} \cite{chen2023quantized, gray1993dithered}.
Moreover, triangular dither also satisfies the condition of (\emph{b}) for $p = 2$.
Under triangular dither, it is further known that
\begin{equation}
        \label{Xi}
        \mathbb{E}[\boldsymbol{\xi}\boldsymbol{\xi}^T] = \dfrac{\Delta^2}{4} \boldsymbol{I}_d,
\end{equation}
which demonstrates that adding appropriate dither before quantization ensures that both the quantization error and quantization noise exhibit statistically desirable properties \cite{chen2023quantizing}.

\section{Ruler-based Quantized Toeplitz Covariance Estimation} \label{sec:rule}

In this section, we formally define the problem of quantized Toeplitz covariance estimation.
Additionally, we propose an unbiased ruler-based estimator tailored for this setting, ensuring accurate estimation under the constraints of quantized data and sparse observation.

\subsection{Problem Description}
We aim to use the quantized data $\dot{\boldsymbol{x}}^{(l)}$ for covariance estimation.
Unlike previous studies on quantized covariance estimation, e.g., \cite{chen2023quantizing}, we focus on the estimation of Toeplitz covariance matrix, and thus use the ruler-based method. Specifically, the problem can be formulated as follows.

\begin{problem}
        \label{Problem_QTCE}
        Let $T \in \mathbb{R}^{d \times d}$ be a symmetric positive semidefinite Toeplitz matrix.
        $\boldsymbol{x}^{(1)}, \ldots, \boldsymbol{x}^{(n)}$ are i.i.d. copies of a zero-mean, $d$-dimensional Gaussian random vector $\boldsymbol{x} \sim \mathcal{N}(0, T)$.
        For a quantization level $\Delta > 0$ and a ruler $R$, assume the availability of quantized observations $\dot{\boldsymbol{x}}^{(l)}_R$, $l = 1, \ldots, n$, where $\dot{\boldsymbol{x}}^{(l)}_R$ denotes the sub-vector of the quantized sample $\dot{\boldsymbol{x}}^{(l)}$ indexed by $R$.
        The goal of quantized Toeplitz covariance estimation (QTCE) is to construct an estimator $\hat{T}$ such that, with probability at least $1-\delta$,
        \begin{equation}
                \|T-\hat{T}\|_2 \leq \epsilon,\qquad \text{or}\qquad \|T-\hat{T}\|_2 \leq \epsilon \|T\|_2,
        \end{equation}
        where $\|\cdot\|_2$ denotes the spectral (operator) norm.
\end{problem}

\begin{figure*}[!t]
    \centering
    \includegraphics[width=0.85\linewidth]{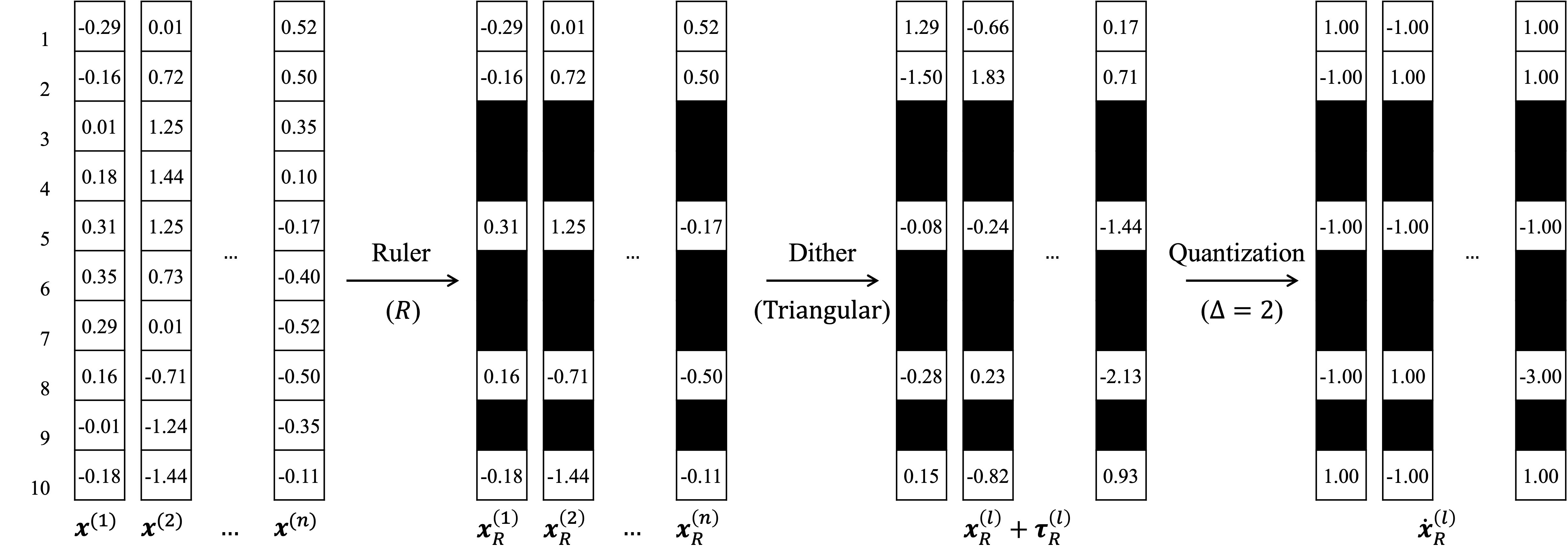}
    \caption{Ruler-based quantized data in the RQTCE framework. In this example, we adopt the sparse ruler $R = \{1, 2, 5, 8, 10\}$ and employ triangular dither for quantization.}
    \label{RQTCE}
\end{figure*}

As illustrated in Fig.~\ref{RQTCE}, the amount of information required for QTCE is minimal.
Compared to the precise samples $\boldsymbol{x}^{(l)}$, we only rely on the quantized sub-vector $\dot{\boldsymbol{x}}_R^{(l)}$ to construct the estimator $\hat{T}$.
To be specific, we neither have access to the precise information of the sample $\boldsymbol{x}_R$ as in traditional Toeplitz covariance estimation \cite{eldar2020sample}, nor do we need to utilize all the information about $\dot{\boldsymbol{x}}$ as in other quantized covariance estimation methods \cite{chen2023quantizing}.
This reduction in required information is due to the inherent Toeplitz structure.

\subsection{Unbiased Estimator}
To address Problem~\ref{Problem_QTCE}, we propose a slight modification to the ruler-based approach.
Specifically, for the given quantization level $\Delta > 0$ and a given ruler $R$, inspired by (\ref{RulerEst}), we define
\begin{equation}
        \label{DotA_L}
        \dot{a}_s^{(l)} = \dfrac{1}{|R_s|} \sum_{(j, k) \in R_s} \dot{x}_j^{(l)} \dot{x}_k^{(l)},
\end{equation}
for $s = 0, \ldots, d-1$.
It represents the quantized approximation for the covariance corresponding to the distance $s$.

In order to obtain an unbiased estimator, we first calculate the mean of $\dot{a}_s^{(l)}$.
Recall that $\dot{x}_j^{(l)} = x_j^{(l)} + \xi_j^{(l)}$, we have
\begin{equation}
        \label{ExpactationDotA_L1}
        \begin{aligned}
                \mathbb{E}(\dot{a}_s^{(l)}) = \dfrac{1}{|R_s|} \sum_{(j, k) \in R_s} \mathbb{E}(\dot{x}_j^{(l)} \dot{x}_k^{(l)}) = \dfrac{1}{|R_s|} \sum_{(j, k) \in R_s} \mathbb{E}\big( (x_j^{(l)} + \xi_j^{(l)})(x_k^{(l)} + \xi_k^{(l)}) \big).
        \end{aligned}
\end{equation}
Note that $T$ is a symmetric Toeplitz matrix, meaning that $T_{j,k}$ is constant for any $(j, k) \in R_s$.
Thus we write
\begin{equation}
        \mathbb{E}(x_j^{(l)} x_k^{(l)}) = T_{j,k} \triangleq a_s,
\end{equation}
where $s = |j - k|$.
Further, under the \textit{triangular dither} defined in (\ref{TriangularDither}), we know
\begin{equation}
        \mathbb{E}(\xi_j^{(l)} \xi_k^{(l)}) = \dfrac{\Delta^2}{4} \delta_{j,k},
\end{equation}
where
\begin{equation}
        \delta_{j,k} = 
        \begin{cases} 
                1, & \text{if } j = k \\
                0, & \text{if } j \neq k
        \end{cases}
\end{equation}
is the Dirac function.
Meanwhile, applying Lemma~\ref{LemmaOfQuantization}, we obtain
\begin{equation}
        \begin{aligned}
                \mathbb{E}(x_j^{(l)} \xi_k^{(l)}) = \mathbb{E}\big(x_j^{(l)} (\omega_k^{(l)} + \tau_k^{(l)})\big)
                = \mathbb{E}(x_j^{(l)} \omega_k^{(l)}) + \mathbb{E}(x_j^{(l)} \tau_k^{(l)}) = 0.
        \end{aligned}
\end{equation}
It is known that $\mathbb{E}(x_k^{(l)} \xi_j^{(l)}) = \mathbb{E}(x_j^{(l)} \xi_k^{(l)}) = 0$ by symmetry.
Thus combining all the above discussions, we conclude that
\begin{equation}
        \label{ExpactationDotA_L2}
        \begin{aligned}
                \mathbb{E}\big( (x_j^{(l)} + \xi_j^{(l)})(x_k^{(l)} + \xi_k^{(l)}) \big)
                = \mathbb{E}(x_j^{(l)} x_k^{(l)}) + \mathbb{E}(x_j^{(l)} \xi_k^{(l)}) + \mathbb{E}(x_k^{(l)} \xi_j^{(l)}) + \mathbb{E}(\xi_j^{(l)} \xi_k^{(l)})
                = a_s + \dfrac{\Delta^2}{4} \delta_s,
        \end{aligned}
\end{equation}
which means $\mathbb{E}(\dot{a}_s^{(l)}) = a_s + \dfrac{\Delta^2}{4} \delta_s$. The result inspires us to define an estimator for $T$ based on $\dot{a}_s^{(l)}$.

\begin{definition}
        \label{EstimatorOfToep}
        Under the settings in Problem~\ref{Problem_QTCE}, the Toeplitz covariance estimator for $T$ is defined as
        \begin{equation}
                \label{Eq_estimator_unbiased_matrix}
                \hat{T} = \text{Toep}(\dot{\boldsymbol{a}}) - \dfrac{\Delta^2}{4} I_d \triangleq \text{Toep}(\hat{\boldsymbol{a}}),
        \end{equation}
        where
        \begin{equation}
                \dot{a}_s = \dfrac{1}{n} \sum_{l=1}^n \dot{a}_s^{(l)} = \dfrac{1}{n |R_s|} \sum_{l = 1}^n \sum_{(j, k) \in R_s} \dot{x}_j^{(l)} \dot{x}_k^{(l)},
        \end{equation}
        is the average value of $\dot{a}_s^{(l)}$.
\end{definition}

In \eqref{Eq_estimator_unbiased_matrix}, the term $\dfrac{\Delta^2}{4} \boldsymbol{I}_d$ arises due to the introduction of the triangular dither.
However, this does not mean that the triangular dither performs poorly.
On the contrary, the use of triangular quantization enables us to calculate the second-order moments of the quantization noise $\boldsymbol{\xi}^{(l)}$ and thus derive (\ref{ExpactationDotA_L2}).
Overall, when triangular dither is employed, $\hat{T}$ serves as an appropriate estimator for $T$.

\begin{theorem}
        Under the settings in Problem~\ref{Problem_QTCE}, if the triangular dither is employed, then
        \begin{equation}
                \label{UnbiasedMean}
                \mathbb{E}(\hat{T}) = T,
        \end{equation}
        which means that the estimator $\hat{T}$ defined in Definition~\ref{EstimatorOfToep} is unbiased.
\end{theorem}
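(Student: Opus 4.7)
The plan is to reduce the theorem to the per-sample bias identity already established in equations (\ref{ExpactationDotA_L1})--(\ref{ExpactationDotA_L2}) and then verify that the additive correction $-\frac{\Delta^2}{4}I_d$ in Definition~\ref{EstimatorOfToep} exactly cancels the residual bias sitting on the main diagonal.

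First I would observe that because the samples $\boldsymbol{x}^{(1)},\ldots,\boldsymbol{x}^{(n)}$ are i.i.d.\ and the triangular dithers used for quantization are drawn independently across the index $l$, the random variables $\dot{a}_s^{(1)},\ldots,\dot{a}_s^{(n)}$ are i.i.d.\ for every fixed $s$. Linearity of expectation therefore gives $\mathbb{E}(\dot{a}_s) = \mathbb{E}(\dot{a}_s^{(1)})$, and the latter has already been evaluated in the lead-up to the theorem as $a_s + \frac{\Delta^2}{4}\delta_s$, where $\delta_0 = 1$ and $\delta_s = 0$ for $s \geq 1$.

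Next I would push this entry-wise identity through the Toeplitz map. Since $\text{Toep}(\cdot)$ is a linear operator and since the vector $(\delta_0,\delta_1,\ldots,\delta_{d-1})^{T}$ maps to the identity $I_d$, I obtain
\begin{equation*}
\mathbb{E}(\text{Toep}(\dot{\boldsymbol{a}})) \;=\; \text{Toep}(\mathbb{E}\dot{\boldsymbol{a}}) \;=\; T + \frac{\Delta^2}{4}I_d.
\end{equation*}
Subtracting $\frac{\Delta^2}{4}I_d$, as prescribed by the definition of $\hat{T}$, yields $\mathbb{E}(\hat{T}) = T$, which is the claim.

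The substantive content, namely the use of Lemma~\ref{LemmaOfQuantization} to obtain $\mathbb{E}(x_j^{(l)}\xi_k^{(l)}) = 0$ and $\mathbb{E}(\xi_j^{(l)}\xi_k^{(l)}) = \frac{\Delta^2}{4}\delta_{j,k}$ under triangular dither, has been handled in the preparatory discussion, so I do not expect any serious obstacle. The one point I would emphasize is that the choice of triangular dither (as opposed to uniform dither) is essential: it is precisely condition (b) of Lemma~\ref{LemmaOfQuantization} with $p=2$, satisfied only by triangular dither, that pins down the second moment of the quantization noise independently of $\boldsymbol{x}$ and makes the correction term $\frac{\Delta^2}{4}I_d$ a deterministic constant rather than a data-dependent quantity.
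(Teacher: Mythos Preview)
Your proposal is correct and follows essentially the same approach as the paper: both reduce immediately to the per-sample identity $\mathbb{E}(\dot{a}_s^{(l)}) = a_s + \frac{\Delta^2}{4}\delta_s$ derived in (\ref{ExpactationDotA_L1})--(\ref{ExpactationDotA_L2}), average over $l$, and observe that the correction $-\frac{\Delta^2}{4}I_d$ cancels the diagonal bias. The paper's proof is a one-liner to this effect; your version is simply more explicit about the linearity of $\text{Toep}(\cdot)$ and adds helpful commentary on why triangular dither is needed.
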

\begin{proof}
        Combining (\ref{ExpactationDotA_L2}) with the fact
        \begin{equation}
                \hat{a}_s = \dot{a}_s - \dfrac{\Delta^2}{4} \delta_s = \dfrac{1}{n} \sum_{l=1}^n \dot{a}_s^{(l)} - \dfrac{\Delta^2}{4} \delta_s,
        \end{equation}
        we can directly get (\ref{UnbiasedMean}).
\end{proof}

Thus far, we have constructed an unbiased estimator of $T$.
However, this construction alone does not fully resolve Problem~\ref{Problem_QTCE}.
To provide a comprehensive solution, it is essential to further analyze the estimation error of $\hat{T}$.
This crucial aspect will be the focus of the subsequent section.

\section{Non-asymptotic Performance Analysis of $\hat{T}$} \label{sec:nona}

In this section, we delve into the analysis of the non-asymptotic error bounds of $\hat{T}$.
While this task presents significant challenges, it is essential for a comprehensive understanding of the estimator's performance.

\subsection{Non-asymptotic Upper Bounds of $\hat{T}$}

We now analyze the error bounds of $\hat{T}$ under both the maximum norm and the operator norm.
The maximum norm, defined as $\|T-\hat{T}\|_{\infty} = \max_{s\in [d]} |a_s - \hat{a}_s|$, captures the entrywise approximation accuracy of $\hat{T}$ relative to $T$.
It is particularly informative for assessing the ability that the estimator recovers individual entries of the covariance matrix.
The corresponding result is stated below.

\begin{theorem}
        \label{Theorem_bound_infty_norm}
        Under the setting of Problem~\ref{Problem_QTCE}, given a fixed quantization level $\Delta > 0$, a prescribed ruler $R$, and assuming the use of triangular dithering, 
        there exists a universal constant $c > 0$ such that
        \begin{equation}
                \label{Eq_bound_infty_norm}
                \begin{aligned}
                        \mathbb{P}\left\{ \| T - \hat{T} \|_{\infty} \geq t \right\} 
                        \leq 2 d \exp{\left[-cn \min \left( \dfrac{t^2}{\mathcal{K}^2}, \dfrac{t}{\mathcal{K}} \right) \right]},
                \end{aligned}
        \end{equation}
        where $\mathcal{K} = 4\|T\|_2 + 8 \Delta^2$. In particular, for any $\delta \in (0, 1]$, if $n \gtrsim \log(d / \delta)$, then with probability at least $1 - \delta$,
        \begin{equation}
                \label{Eq_bound_entrywise}
                \| T - \hat{T} \|_{\infty} \leq C \mathcal{K} \sqrt{\dfrac{\log(d / \delta)}{n}},
        \end{equation}
        where $C > 0$ is a constant that depends only on $c$.
\end{theorem}

\begin{proof}
        See Appendix~\ref{Proof_bound_infty_norm}.
\end{proof}

Theorem~\ref{Theorem_bound_infty_norm} shows that the proposed estimator $\hat{T}$ can achieve arbitrarily high entrywise accuracy in approximating $T$.
This finding offers partial validation of the estimator's effectiveness and serves as a key ingredient for the subsequent analyses.

Subsequently, we shift our focus to the error bounds in terms of the operator norm.
Deriving non-asymptotic bounds for $\hat{T}$ under the operator norm, however, is a challenging task.
While it is known that $\dot{\boldsymbol{x}} = \boldsymbol{x} + \boldsymbol{\tau} + \boldsymbol{\omega}$, the intricate relationship between $\boldsymbol{\omega}$ and $\boldsymbol{\tau}$ leaves the distribution of $\dot{\boldsymbol{x}}$ uncertain.
This uncertainty introduces significant challenges in directly leveraging existing theoretical frameworks.
Fortunately, for Toeplitz matrices, the following fundamental result provides a useful tool for bounding their operator norm \cite{meckes2007spectral}.

\begin{lemma}
        \label{Lemma_toeplitz_operator_norm_symbol}
        Let $\boldsymbol{e} = (e_0, \ldots, e_{d-1})^\top \in \mathbb{R}^d$.
        The operator norm of the Toeplitz matrix $\mathrm{Toep}(\boldsymbol{e})$ is bounded by
        \begin{equation}
                \| \mathrm{Toep}(\boldsymbol{e}) \|_2 \leq \sup_{x \in [0,1]} L_{\boldsymbol{e}}(x),
        \end{equation}
        where the associated \textit{spectral density function} $L_{\boldsymbol{e}} : [0,1] \to \mathbb{R}$ is defined as
        \begin{equation}
                L_{\boldsymbol{e}}(x) = e_0 + 2 \sum_{s=1}^{d-1} e_s \cos(2\pi s x).
        \end{equation}
\end{lemma}

Lemma~\ref{Lemma_toeplitz_operator_norm_symbol} establishes a bound on the operator norm of a Toeplitz matrix via the spectral density function $L_{\boldsymbol{e}}(x)$.
This formulation reduces the task of bounding $T - \hat{T}$ to analyzing the supremum of $L_{\boldsymbol{e}}(x)$.
Building on this lemma, we obtain the following theorem.

\begin{theorem}
        \label{Theorem_bound_operator_norm}
        Under the setting of Problem~\ref{Problem_QTCE}, given a fixed quantization level $\Delta > 0$, a prescribed ruler $R$, and assuming the use of triangular dithering,
        there exists a universal constant $c > 0$ such that
        \begin{equation}
                \label{Eq_bound_spectral_norm_t}
                \mathbb{P}\left\{\|T - \hat{T}\|_2 \geq t \right\} \leq 32 \pi d^2 \exp{\left(-cn \kappa \right)},
        \end{equation}
        where $\mathcal{K} = 4 \|T\|_2 + 8\Delta^2$ and $\kappa = \min \left( \dfrac{t^2}{\mathcal{K}^2 \phi(R)}, \dfrac{t}{\mathcal{K} \sqrt{\phi(R)}} \right)$. 
        In particular, for any $\delta \in (0,1]$, if $n \gtrsim \log(d / \delta)$, then with probability at least $1 - \delta$,
        \begin{equation}
                \label{Eq_bound_operator_t}
                \|T - \hat{T}\|_2 \leq C \mathcal{K} \sqrt{\dfrac{\phi(R) \cdot \log(d / \delta)}{n}},
        \end{equation}
        where $C > 0$ is a constant that depends only on $c$.
\end{theorem}

\begin{proof}
        In this proof, we relate the operator norm error $\|T - \hat{T}\|_2$ to its associated spectral density function $L_{\boldsymbol{a} - \hat{\boldsymbol{a}}}(x)$.
        Specifically, we complete the proof by reducing the operator norm error $\|T - \hat{T}\|_2$ to a uniform bound on $L_{\boldsymbol{a} - \hat{\boldsymbol{a}}}(x)$ over $x \in [0,1]$.
        That is, we establish
        \begin{equation}
                \label{Eq_bound_spectral_density}
                \mathbb{P}\left\{\sup_{x\in [0,1]} |L_{\boldsymbol{e}} (x)| \leq t \right\} \geq 1 - 32 \pi d^2 \exp{\left(-cn \kappa \right)},
        \end{equation}
        This is a standard and effective technique that has been widely adopted in the literature \cite{han2012covariance, eldar2020sample, klockmann2024efficient}.
        At this point, the focus of the proof shifts to establishing \eqref{Eq_bound_spectral_density}, which consists of the following three steps:
        Firstly, we analyze the element-wise error $a_s - \hat{a}_s$, which has already been established in Theorem~\ref{Theorem_bound_infty_norm}. This step quantifies the influence of quantization on each individual coefficient.
        Secondly, we establish a pointwise bound on the spectral density function $L_{\boldsymbol{a} - \hat{\boldsymbol{a}}}(x)$ for a fixed $x \in [0,1]$. This step is technically more involved, as it requires carefully handling missing entries induced by the sampling pattern specified by the ruler $R$.
        Finally, we convert the uniform bound \eqref{Eq_bound_spectral_density} by leveraging a covering argument, together with the previous two steps.
        Full technical details are provided in Appendix~\ref{Proof_bound_operator_norm}.
\end{proof}

So far our results have been derived under the assumption that $\boldsymbol{x}^{(1)}, \ldots, \boldsymbol{x}^{(n)}$ are Gaussian.
We note that in the proofs, we mainly rely on their sub-Gaussian nature, meaning that the results can be readily generalized to the sub-Gaussian case with a minor adjustment to \eqref{Eq_sub_gaussian_norm_of_x}.
This generalization ensures that our results offer a more general theoretical guarantee for $\hat{T}$. Furthermore, our proposed estimator $\hat{T}$ demonstrates the capability to approximate $T$ with arbitrary precision $\epsilon$ in terms of the operator norm. This result highlights the potential of performing covariance estimation using low-resolution data, making it particularly valuable in scenarios where data resolution is constrained by practical limitations.

\begin{remark}
        In Theorem~\ref{Theorem_bound_operator_norm}, we also showed that uniformly consistent spectral density estimation of a Gaussian time series can be achieved by simply setting \(\hat{L}_{\boldsymbol{a}}(x) = L_{\hat{\boldsymbol{a}}}(x)\).
        Specifically, it holds that
        \begin{equation}
                \mathbb{P}\left\{\sup_{x\in [0,1]} |L_{\boldsymbol{a}} (x) - \hat{L}_{\boldsymbol{a}} (x)| \leq t \right\} \geq 1 - 32 \pi d^2 \exp{\left(-cn \kappa \right)},
        \end{equation}
        which corresponds to \eqref{Eq_bound_spectral_density}.
        This result provides a non-asymptotic bound for spectral density estimation, which is of independent interest in spectral analysis \cite{stoica2005spectral, loubaton2023asymptotic, zhang2025spectral}.
\end{remark}

\subsection{Upper Bound under the Specific Ruler $R_\alpha$}
The upper bound in Theorem~\ref{Theorem_bound_operator_norm} applies to general ruler $R$, but its value is affected by the coverage coefficient $\phi(R)$.
As noted earlier, $\phi(R)$ depends on multiple factors and lacks a simple closed-form expression.
To offer a more concrete illustration of Theorem~\ref{Theorem_bound_operator_norm}, we focus in this part on the specific ruler $R_\alpha$ defined in Definition~\ref{SpecialRuler}.

While it is challenging to analyze the value of $\phi(R)$ for a general ruler, for the ruler $R_{\alpha}$ defined in Definition~\ref{SpecialRuler}, we have the following useful lemma.

\begin{lemma}
        \label{Lemma_bound_phi_r_alpha}
        For any $\alpha \in [1/2,1]$, the cardinality of the ruler $R_{\alpha}$ satisfies $|R_{\alpha}| = O\left(d^{\alpha}\right)$, and the corresponding coverage coefficient admits the bound
        \begin{equation}
                \label{Eq_ruler_phi_r_alpha}
                \phi(R_\alpha) 
                \leq 2 d^{2 - 2\alpha} + d^{1 - \alpha} \left(1 + \log \left(\left\lceil d^{2 \alpha - 1} \right\rceil\right)\right)
                = d^{2 - 2\alpha} + O(d^{1 - \alpha} \cdot \log d).
        \end{equation}
\end{lemma}

\begin{proof}
        See Appendix~\ref{Proof_bound_phi_r_alpha}.
\end{proof}

Note that Lemma~\ref{Lemma_bound_phi_r_alpha} is consistent with (\ref{RRRRRuler}) and provides an upper bound for $\phi(R_\alpha)$.
For a general ruler $R$, once an upper bound for $\phi(R)$ is available, the same line of analysis applies.
The tightness of this bound directly impacts the sharpness of the error bound in \eqref{Eq_bound_operator_t}.
Combining Lemma~\ref{Lemma_bound_phi_r_alpha} with Theorem~\ref{Theorem_bound_operator_norm} immediately yields the following corollary.

\begin{corollary}
        \label{Corollary_upper_bound_r_alpha}
        For any $\alpha \in [1/2, 1]$, consider the use of a specialized ruler $R_\alpha$.
        Under the conditions in Theorem~\ref{Theorem_bound_operator_norm}, for any $\delta \in (0, 1]$, if $n \gtrsim \log(d/\delta)$, then with probability at least $1-\delta$,
        \begin{equation}
                \|T - \hat{T}\|_2 \leq C \mathcal{K} \sqrt{\dfrac{\max(d^{2-2\alpha}, d^{1-\alpha} \log d) \cdot \log(d / \delta)}{n}},
        \end{equation}
        where $\mathcal{K} = 4 \|T\|_2 + 8\Delta^2$ and $C>0$ is a constant. In particular, for any $\epsilon > 0$, if
        \begin{equation}
                \label{Eq_r_alpha_upper_bound}
                n \gtrsim_{\mathcal{K}} \dfrac{\max(d^{2-2\alpha}, d^{1-\alpha} \log d) \cdot \log(d / \delta)}{\epsilon^2}
        \end{equation}
        then $\|T - \hat{T}\|_2 \leq \epsilon$.
\end{corollary}

Notably, when $\Delta = 0$, we have $\mathcal{K} = 4 \|T\|_2$, and Corollary~\ref{Corollary_upper_bound_r_alpha} then reduces to \cite[Theorem 4.4]{eldar2020sample}, aside from minor differences in the constants.
Remarkably, when $\Delta \neq 0$, the bound in Corollary~\ref{Corollary_upper_bound_r_alpha} degrades only by a factor of $\mathcal{K}$, indicating that even with coarse quantized data, the convergence rate remains of the same order as in the unquantized case.

\begin{remark}
        From Corollary~\ref{Corollary_upper_bound_r_alpha}, we observe the inherent trade-off between VSC, ESC, and resolution in Toeplitz covariance estimation.
        Specifically, VSC corresponds to the number of samples $n$, while ESC is related to the cardinality of the ruler $R$, which is closely tied to the coverage coefficient $\phi(R)$.
        The resolution, on the other hand, is determined directly by the quantization level $\Delta$.
        A larger $\Delta$ implies a lower resolution.
        Corollary~\ref{Corollary_upper_bound_r_alpha} demonstrates that reducing one or more of these quantities (VSC, ESC or resolution) necessitates compensatory increases in the remaining quantities to maintain the desired estimation accuracy.
        For instance, for a fixed ruler $R_\alpha$, reducing the resolution of the data (i.e., increasing $\Delta$) requires an increase in the number of samples (i.e., VSC) to achieve the same level of accuracy.
        This interplay highlights the necessity of balancing these factors in practical covariance estimation tasks.
\end{remark}

To further illustrate the implications of Corollary~\ref{Corollary_upper_bound_r_alpha}, we examine two special cases: the \emph{full ruler} $R_1$ and the \emph{sparse ruler} $R_{1/2}$.
These cases provide contrasting perspectives on the trade-offs between VSC, ESC and resolution.
The results are summarized in the following corollary.

\begin{corollary}
        \label{Corollary_1_12}
        Under the conditions of Corollary~\ref{Corollary_upper_bound_r_alpha}, it holds that $\|T-\hat{T}\|_2 \leq \epsilon$ with probability at least $1 - \delta$ if any of the two conditions holds:
    \begin{enumerate}
        \item The full ruler $R_1$ is used with the VSC
        \begin{equation}
            n_1 \gtrsim_{\mathcal{K}} \dfrac{\log d \cdot \log(d / \delta)}{\epsilon^2},
        \end{equation}
        
        \item or, the sparse ruler $R_{1/2}$ is used with the VSC
        \begin{equation}
                n_{1/2} \gtrsim_{\mathcal{K}} \dfrac{d \cdot \log(d / \delta)}{\epsilon^2}
        \end{equation}
    \end{enumerate}
    where $\mathcal{K} = 4 \|T\|_2 + 8\Delta^2$.
\end{corollary}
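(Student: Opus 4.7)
The plan is to obtain Corollary~\ref{Corollary_1_12} as a direct specialization of Theorem~\ref{RateOfCovergenceR_alpha} by evaluating the convergence rate at the two specific choices $\alpha=1$ (full ruler) and $\alpha=1/2$ (sparse ruler). Since Theorem~\ref{RateOfCovergenceR_alpha} already supplies a general sample complexity bound indexed by $\alpha$, nothing new in the way of probabilistic analysis is required; the task reduces to substituting the two exponents, simplifying the $\max\{d^{2-2\alpha},\, d^{1-\alpha}\log d\}$ term in closed form, and collecting the coefficients into the prefactor $\mathcal{L}=(\|T\|_2^2+\Delta^4)/\|T\|_2^2$.

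First, for the full ruler case $\alpha=1$, I would compute $d^{2-2\alpha}=d^{0}=1$ and $d^{1-\alpha}\log d=\log d$. The maximum of these two is $\log d$, so the formula in Theorem~\ref{RateOfCovergenceR_alpha} collapses to
\begin{equation*}
n_1 \;=\; \mathcal{L}\cdot\Theta\!\left(\dfrac{\log d\cdot \log(d/\epsilon\delta)}{\epsilon^2}\right),
\end{equation*}
matching the statement of part (1). Since $R_1=[d]$ is itself a valid ruler in Definition~\ref{SpecialRuler}, the hypotheses of Theorem~\ref{RateOfCovergenceR_alpha} apply directly.

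Second, for the sparse ruler case $\alpha=1/2$, I would compute $d^{2-2\alpha}=d^{1}=d$ and $d^{1-\alpha}\log d=\sqrt{d}\,\log d$. Since $\sqrt{d}\log d = o(d)$ as $d\to\infty$, the first term dominates and $\max\{d,\, \sqrt{d}\log d\}=\Theta(d)$. Substituting into Theorem~\ref{RateOfCovergenceR_alpha} yields
\begin{equation*}
n_{1/2}\;=\;\mathcal{L}\cdot\Theta\!\left(\dfrac{d\cdot\log(d/\epsilon\delta)}{\epsilon^2}\right),
\end{equation*}
which is exactly part (2). The high-probability bound $\|T-\hat{T}\|_2\le\epsilon\|T\|_2$ with probability at least $1-\delta$ is inherited in both cases directly from the conclusion of Theorem~\ref{RateOfCovergenceR_alpha}.

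There is essentially no genuine obstacle here: the content of the corollary is an arithmetic consequence of the $\alpha$-parameterized bound. The only minor subtlety worth noting in the write-up is justifying the dominance $\sqrt{d}\log d = O(d)$ used to absorb the second term inside the $\Theta(\cdot)$ for the $\alpha=1/2$ case; for small $d$ this can be handled by enlarging the absolute constant hidden in $\Theta$, which is harmless. No additional probabilistic tools, concentration inequalities, or operator-norm bounds (e.g.\ Lemma~\ref{BoundForToep}) need to be reinvoked, since they have already been absorbed into Theorem~\ref{MainTheorem} and its successor Theorem~\ref{RateOfCovergenceR_alpha}.
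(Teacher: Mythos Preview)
Your proposal is correct and matches the paper's approach exactly: the corollary is stated immediately after Theorem~\ref{RateOfCovergenceR_alpha} as a direct specialization to $\alpha=1$ and $\alpha=1/2$, with no separate proof given. Your evaluation of the $\max\{d^{2-2\alpha},\, d^{1-\alpha}\log d\}$ term in both cases is the entire content needed.
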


Corollary~\ref{Corollary_1_12} demonstrates that for a prescribed tolerance $\epsilon$ and quantization level $\Delta$, reducing the ESC from $|R_1| = d$ to $|R_{1/2}| = O(\sqrt{d})$ inevitably increases VSC.
Despite this trade-off, the performance remains highly competitive: even with the sparse ruler $R_{1/2}$, the proposed estimator $\hat{T}$ attains a convergence rate of $O\left(\sqrt{\dfrac{d \log{(d / \delta)}}{n}}\right)$, which is comparable to the result in \cite[Theorem 3]{chen2023quantizing} where the full samples are observed.
Importantly, quantization only affects the coefficients of the convergence rate, leaving the convergence order unchanged.
These results suggest that the proposed estimator $\hat{T}$ is near optimal.

\subsection{Non-asymptotic Lower Bound of $\hat{T}$}
In the preceding sections, we established an upper bound on the estimation error.
To complement this result and verify its tightness, we now turn to the analysis of a corresponding lower bound.
In particular, for the specific ruler $R_\alpha$, we investigate the minimum sample size required to attain a prescribed accuracy in a statistical sense, stated as follows.

\begin{theorem}
        \label{Theorem_lower_bound}
        Let $R \subset [d]$ be a sparse ruler with $|R| \leq d^\alpha$ for some $\alpha \in [1/2, 1]$.
        Under the conditions in Theorem~\ref{Theorem_bound_operator_norm}, let $\epsilon > 0$ be sufficiently small.
        Consider any (possibly randomized) algorithm that only has access to quantized observations $\dot{\boldsymbol{x}}^{l}_R$, $l=1,2, \ldots, n$ before outputting $\tilde{T}$.
        If $\|T - \tilde{T}\|_2 \leq \epsilon$ holds for all PSD Toeplitz covariance matrices $T$ with probability at least $1/10$, then it must hold that
        \begin{equation}
                \label{Eq_lo_b}
                n \gtrsim \dfrac{\max(d^{3-4\alpha}, d^{1-\alpha})}{\epsilon^2}.
        \end{equation}
\end{theorem}

\begin{proof}
        The proof is completed by applying Assouad's Lemma \cite{assouad1983deux, eldar2020sample} and Le Cam's method \cite{yu1997assouad}.
        Full details are presented in Appendix~\ref{Proof_lower_bound}.
\end{proof}

When $\alpha = 1$ or $1/2$, the conclusion in Theorem~\ref{Theorem_lower_bound} is consistent with \eqref{Eq_r_alpha_upper_bound}, up to logarithmic factors.
In the intermediate regime $\alpha\in(1/2,1)$, the bounds in \eqref{Eq_lo_b} and \eqref{Eq_r_alpha_upper_bound} do not coincide; nevertheless, they remain informative and delineate the qualitative behavior of the problem.
Closing this gap and establishing matching bounds in this regime is an interesting open problem for future work.
It worth noting that Theorem~\ref{Theorem_lower_bound} yields a tighter bound than \cite[Theorem~4.5]{eldar2020sample} in the non-quantized Gaussian setting.

\subsection{Positive Semi-definiteness of $\hat{T}$}
Obtaining a positive semidefinite (PSD) estimator is often essential for subsequent algorithms that rely on the estimated covariance matrix.
However, for Toeplitz-structured estimators, ensuring PSDness is particularly challenging.
From a theoretical perspective, the proposed estimator $\hat{T}$ is not guaranteed to be PSD in all cases.
Nevertheless, the operator-norm error bound established in Theorem~\ref{Theorem_bound_operator_norm} implies that $\hat{T}$ becomes positive definite with high probability when the sample size $n$ is sufficiently large.
Specifically, we establish the following result.

\begin{theorem}
        \label{Theorem_psd}
        Under the conditions in Theorem~\ref{Theorem_bound_operator_norm}, suppose $T$ is positive definite (PD) with the minimum eigenvalue $\lambda_d(T) > 0$. For any $\delta \in (0, 1]$, if
        \begin{equation}
                \label{Eq_psd_n}
                n > \dfrac{c_0 C^2 \mathcal{K}^2 \phi(R) \cdot \log (d / \delta)}{\lambda_d^2(T)}
        \end{equation}
        for some constant $c_0 > 1$, then $\hat{T}$ is positive definite with probability at least $1 - \delta$.
\end{theorem}

While the result is conservative, empirical evidence from simulation experiments suggests that the estimated covariance matrix is typically positive semidefinite in practice.

Widely known as the precision matrix, the inverse covariance matrix is essential to tasks such as classification, estimation, and risk modeling across domains like machine learning and finance.
Accurate estimation of the precision matrix is essential, as errors in its approximation can propagate and significantly affect subsequent tasks.
Motivated by these considerations, we further derive error bounds for the precision matrix.

\begin{theorem}
        \label{Theorem_precision}
        Under the conditions in Theorem~\ref{Theorem_psd}, the PD estimator $\hat{T}$ satisfies
        \begin{equation}
                \label{Eq_psd_bound_inverse_t}
                \|T^{-1}-\hat{T}^{-1}\|_2
                \leq \dfrac{C_{inv} \mathcal{K}}{\lambda_d^2(T)} \sqrt{\dfrac{\phi(R) \cdot \log (d / \delta)}{n}},
        \end{equation}
        where $C_{inv}$ is a constant depending only on $C$ and $c_0$.
\end{theorem}

\begin{proof}
        Both this result and Theorem~\ref{Theorem_psd} follow directly from Theorem~\ref{Theorem_bound_operator_norm} by applying Weyl's inequality.
        Detailed derivations are provided in Appendix~\ref{Proof_psd}.
\end{proof}

This result indicates that the precision matrix satisfies the same bounds as the covariance matrix, differing only in the constant factors.
Consequently, with high probability, our proposed method is capable of estimating the precision matrix with high accuracy.

\section{Finite-Bit Estimator and Non-Asymptotic Performance Guarantees} \label{sec:fini}
In the previous section, we introduced an unbiased estimator $\hat{T}$ and conducted a detailed analysis of its error bounds.
A limitation of that approach is that, although the parameter $\Delta$ accounts for the effect of data resolution on approximation accuracy, the quantizer under consideration allows an infinite number of potential outputs and is thus not directly applicable to finite-bit sampling scenarios.
In practice, however, many applications are constrained to only a few bits per entry.
In this section, we address this limitation by constructing a finite-bit estimator through a careful selection of $\Delta$.

\subsection{The $k$-bit Estimator and Its Error Bound}
The quantized data in the previous section is infinite-bit because the underlying Gaussian random variable is unbounded.
However, for Gaussian variables, most of their probability mass is concentrated around the expectation, allowing the tail probabilities to be tightly controlled.
By applying appropriate truncation, we can transform them into bounded random variables, whose quantized outputs become finite-bit.
For instance, \cite{chen2025parameter} achieves a 2-bit quantizer through suitable truncation.
In general, obtaining $k$-bit quantized data requires that the quantizer output contain at most $2^k$ distinct levels.
Following this principle, we consider the $k$-bit quantizer defined as
\begin{equation}
        \label{Eq_quantiter_k_bit}
        \mathcal{Q}_{\Delta, k} (x) :=
        \begin{cases}
                (2^{k-1} + 1/2) \Delta, &\quad \mathrm{if}\  x \geq (2^{k-1} - 1) \Delta,\\
                -(2^{k-1} + 1/2) \Delta, &\quad \mathrm{if}\  x < (1 - 2^{k-1}) \Delta,\\
                \mathcal{Q}_{\Delta}(x), &\quad \mathrm{otherwise}.
        \end{cases}
\end{equation}

We denote by $\hat{T}_k$ the estimator constructed using the $k$-bit quantizer\footnote{Both $\hat{T}_k$ and $\hat{T}$ are constructed in exactly the same way, except that in the construction of $\hat{T}_k$, each sample entry $x_j^{(l)}$ is quantized using $\mathcal{Q}_{\Delta, k}(\cdot)$ instead of $\mathcal{Q}_{\Delta}(\cdot)$.
Here, $k$ bits means that each quantized sample entry $\dot{x}_j^{(l)}$ is represented using only $k$ bits.}.
For Gaussian data, if $\Delta$ is chosen sufficiently large, then $\mathcal{Q}_{\Delta, k}(\cdot) = \mathcal{Q}_{\Delta}(\cdot)$ holds with high probability.
In this case, $\hat{T}_k$ inherits the same error bound as $\hat{T}$ with high probability.
Formally, we have the following theorem.

\begin{theorem}
        \label{Theorem_finite_bit_quant}
        For any prescribed integer $k \geq 2$ and any $\delta, \delta'\in(0,1]$, consider the Toeplitz estimator $\hat{T}_k$ constructed with the $k$-bit quantizer defined in \eqref{Eq_quantiter_k_bit}.
        Under the setting of Problem~\ref{Problem_QTCE}, given a prescribed ruler $R$, and assuming the use of triangular dithering,
        if we set
        \begin{equation}
                \label{Eq_finite_bit_delta}
                \Delta = C_{\mathrm{bit}} \cdot 2^{-k} \sqrt{ \|T\|_{\infty} \log\left( \frac{2n|R|}{\delta'} \right)}
        \end{equation}
        with $C_{bit} \geq \sqrt{2}$, and choose $n \gtrsim \log (d / \delta)$, 
        then there exists a constant $C>0$ such that, with probability at least $1 - \delta - \delta'$, the $k$-bit estimator $\hat{T}_k$ satisfies
        \begin{equation}
                \label{Eq_bound_finite_bit_t_hat_k}
                \begin{aligned}
                        \|T - \hat{T}_k\|_2 \leq C \mathcal{K} \sqrt{\dfrac{\phi(R) \cdot \log(d / \delta)}{n}}
                \end{aligned}
        \end{equation}
        where $\mathcal{K} = 4\|T\|_2 + 8 \Delta^2$.
\end{theorem}

\begin{proof}
        In this proof, we first establish that $\hat{T}_k = \hat{T}$ holds with overwhelming probability.
        Conditional on this event, it is immediate that $\hat{T}_k$ and $\hat{T}$ share the same error bounds with high probability.
        Detailed arguments are provided in Appendix~\ref{Proof_finite_bit_quant}.
\end{proof}

In Theorem~\ref{Theorem_finite_bit_quant}, choosing an excessively large value of $C_{bit}$ reduces the data resolution, whereas choosing it too small leads to excessive truncation of tail data.
Therefore, an appropriate value of $C_{bit}$ should be selected in practice to balance these effects.

\begin{remark}
        When the full ruler (i.e., $\alpha = 1$) is employed, \eqref{Eq_bound_finite_bit_t_hat_k} simplifies to
        \begin{equation}
                \|T - \hat{T}_k\|_2 \lesssim (\|T\|_2 + 8 \Delta^2) \sqrt{\dfrac{\log d \cdot \log(d / \delta)}{n}}
        \end{equation}
        where $\Delta$ is as defined in \eqref{Eq_finite_bit_delta}.
        In particular, for the 2-bit case, we have
        \begin{equation}
                \|T - \hat{T}_k\|_2 \lesssim \sqrt{\dfrac{\|T\|_\infty^2 \log d \cdot \log(d / \delta) \cdot \log^2(2nd / \delta')}{n}},
        \end{equation}
        which is tighter than the general 2-bit covariance estimation bound given in \cite[Theorem 4]{chen2025parameter}.
        This improvement stems from exploiting the Toeplitz structure.
\end{remark}

Building on Theorem~\ref{Theorem_finite_bit_quant}, we derive the following result.

\begin{corollary}
        \label{Corollary_finite_bit}
        Under the conditions in Theorem~\ref{Theorem_finite_bit_quant}, for any $p, p' > 2$,
        suppose we set
        \begin{equation}
                \Delta = C_{\mathrm{bit}} \cdot 2^{-k} \sqrt{ \|T\|_{\infty} \log\left( \frac{2n|R|}{\delta'} \right)}
        \end{equation}
        with $C_{bit} \geq \sqrt{2}$, and choose
        \begin{equation}
                \label{Eq_n_trade_off_k}
                n \gtrsim \max\left\{ p \log d, \dfrac{1}{|R|} \exp \left( \dfrac{2^{2k} \|T\|_2}{p' \|T\|_{\infty}}\right)\right\}.
        \end{equation}
        Then it holds that
        \begin{equation}
                \| T - \hat{T}_k \|_2 \lesssim \dfrac{p' \|T\|_{\infty} \log (n |R|)}{2^{2k}} \cdot \sqrt{\dfrac{p \phi(R) \log(d)}{n}}
        \end{equation}
        with probability at least $1 - d^{1 - p} - 2 (n |R|)^{1 - p'}$.
\end{corollary}

\begin{proof}
        This is an immediate consequence of Theorem~\ref{Theorem_finite_bit_quant} with $\delta = d^{1-p}$ and $\delta' = 2(n|R|)^{1-p'}$.
\end{proof}

It should be noted that the second term on the right-hand side of \eqref{Eq_n_trade_off_k} is to ensure that $\Delta^2 \gtrsim \|T\|_2$, which is naturally satisfied when $k$ is small.
In fact, if this condition is violated, we have
\begin{equation}
        \| T - \hat{T}_k \|_2 \lesssim \|T\|_2 \cdot \sqrt{\dfrac{p \phi(R) \log(d)}{n}}
\end{equation}
In this scenario, the upper bound on the error becomes independent of $k$, implying that increasing the data resolution yields a notable reduction in estimation error only when the sample size is sufficiently large or $k$ is relatively small.
This phenomenon will be further validated in our subsequent experiments.

\subsection{Detailed Analysis of the Trade-off among VSC, ESC, and Resolution}
Building on the error bound for the $k$-bit estimator established in Theorem~\ref{Theorem_finite_bit_quant}, we next examine the trade-off among VSC, ESC, and resolution (i.e., $k$), using $R_\alpha$ as a representative example.
This leads to the following corollary.

\begin{corollary}
        \label{Corollary_ruler_alpha}
        For any \(\alpha \in [1/2, 1]\), consider the use of a specialized ruler $R_{\alpha}$.
        Under the conditions in Theorem~\ref{Theorem_finite_bit_quant}, for any $p, p' > 2$,
        suppose we set
        \begin{equation}
                \Delta = C_{\mathrm{bit}} \cdot 2^{-k} \sqrt{ \|T\|_{\infty} \log\left( \frac{2n|R_\alpha|}{\delta'} \right)}
        \end{equation}
        with $C_{bit} \geq \sqrt{2}$, and choose $n \gtrsim \max\left\{ p \log d, \dfrac{1}{|R_\alpha|} \exp \left( \dfrac{2^{2k} \|T\|_2}{p' \|T\|_{\infty}}\right)\right\}$.
        Then, for any prescribed tolerance $\epsilon > 0$, in order to guarantee
        \begin{equation}
                \|T - \hat{T}_k\|_2 \leq \epsilon,
        \end{equation}
        with probability at least $1 - d^{1 - p} - 2 (n |R|)^{1 - p'}$,
        the following trade-off between the vector sample complexity (VSC) $n$, the entry sample complexity (ESC) $|R_{\alpha}|$, and the number of quantization bits per entry $k$ must be satisfied
        \begin{equation}
                \label{Eq_trade_off_r_alpha}
                p' \|T\|_{\infty} \log (n|R_\alpha|) \cdot 
                \sqrt{\dfrac{p \max\left\{d^2 / |R_\alpha|^2, d / |R_\alpha| \cdot \log(d)\right\} \cdot \log(d)}{n}} 
                \lesssim \epsilon \cdot 2^{2k}.
        \end{equation}
\end{corollary}

\begin{proof}
        This result follows directly by substituting the bound in Lemma~\ref{Lemma_bound_phi_r_alpha} into Corollary~\ref{Corollary_finite_bit}.
\end{proof}

Corollary~\ref{Corollary_ruler_alpha} provides an explicit complexity trade-off.
Building on this result, we refine the relationship between the key parameters.
By squaring both sides of~\eqref{Eq_trade_off_r_alpha}, we obtain
\begin{equation}
        n 
        \gtrsim_{p, p', \|T\|_{\infty}} \dfrac{\log^2 (n|R_\alpha|) \log(d)}{\epsilon^2 \cdot 2^{4k}} 
        \cdot \max\left\{d^2 / |R_\alpha|^2, d / |R_\alpha| \cdot \log(d)\right\}
\end{equation}
To highlight the scaling behavior, we absorb logarithmic factors into the \(\widetilde{O}\)-notation, yielding
\begin{equation}
        \label{Eq_trade_off_r_alpha_simplify}
        n =
        \widetilde{O} \left( \dfrac{\log(d)}{\epsilon^2 \cdot 2^{4k}} 
        \cdot \max\left\{\dfrac{d^2}{|R_\alpha|^2}, \dfrac{d \log(d)}{|R_\alpha|}\right\}\right).
\end{equation}
The constraint in \eqref{Eq_trade_off_r_alpha_simplify} explicitly characterizes the interplay among VSC, ESC, and quantization resolution.
In particular, fixing any one of these parameters induces explicit trade-offs between the remaining two.
Taking into account that $|R_\alpha| = O(d^\alpha)$, we summarize the resulting trade-off as follows.

For a fixed quantization bits $k$, there exists an order-of-magnitude relationship between the vector sample complexity $n$ and the entry sample complexity parameter \(\alpha\), i.e.,
\begin{equation}
        n =
        \widetilde{O} \left( \dfrac{\log(d)}{\epsilon^2 \cdot 2^{4k}} 
        \cdot \max\left\{d^{2 - 2\alpha}, d^{1 - \alpha} \log(d)\right\}\right).
\end{equation}
As $\alpha$ approaches $1/2$, the VSC scales linearly with $d$, whereas for $\alpha \rightarrow 1$, it approaches a $\log^2(d)$-type growth.
This reveals a fundamental trade-off: denser sampling patterns reduce the required number of samples at the cost of observing more entries per sample.

When the entry sample complexity $|R_\alpha|$ is fixed, the relationship between $n$ and $k$ exhibits an exponential dependence.
Specifically,
\begin{equation}
        n =
        \widetilde{O} \left( \dfrac{C(d, \alpha)}{\epsilon^2 \cdot 2^{4k}} \right),
\end{equation}
where $C(d, \alpha)$ is a constant that depends only on $d$ and $\alpha$.
This result implies that for low-bit quantization, more samples are required to compensate for the information loss due to coarse quantization.
Conversely, increasing the number of quantization bits leads to an exponential reduction in the required sample size\footnote{This exponential decay in required sample size $n$ with increasing $k$ holds primarily when $k$ is small. As $k$ becomes large (e.g., from $k=6$ to $k=8$), the marginal gain diminishes because the quantized observations are already sufficiently informative. We will provide further empirical evidence for this phenomenon in the subsequent experiments.}.
This quantifies the inherent balance between VSC and quantization precision.

With the vector sample complexity $n$ fixed, the relationship between $k$ and $|R_\alpha|$ is characterized by
\begin{equation}
        \label{Eq_trade_off_fixed_n}
        \log^2(n|R_\alpha|) \cdot \max\left\{ \frac{d^2}{|R_\alpha|^2},\ \frac{d\log(d)}{|R_\alpha|} \right\} \lesssim_{p,p',\|T\|_\infty} \frac{n \cdot \epsilon^2 \cdot 2^{4k}}{ \log(d) }.
\end{equation}
This highlights the trade-off between quantization accuracy and observation sparsity (i.e., ESC):
increasing the number of quantization bits \(k\) exponentially enlarges the right-hand side of~\eqref{Eq_trade_off_fixed_n}, thus allowing for sparser measurements.

\section{Low-rank and Banded Cases} \label{sec:lowr}

For Toeplitz matrices, the low-rank and banded cases are of particular interest due to their prevalence in practical applications \cite{maly2022new}.
In this section, we focus on these two special cases by incorporating the low-rank and band-limit properties into the quantized Toeplitz covariance estimation problem.
By leveraging these structural assumptions, we aim to achieve a faster rate of convergence compared to the general case.

\subsection{Low-rank Case}
Low-rankness is a property often considered in matrix analysis \cite{negahban2011estimation}.
Therefore, we further explore the effect of low-rankness on the rate of convergence.
However, upon examining the conclusion in Theorem \ref{Theorem_bound_operator_norm}, it appears that low-rankness cannot be directly related to the results presented there.
So we have to make a slight modification to it.

For low-rank Toeplitz matrices, we rely on the following lemma from \cite{eldar2020sample}:

\begin{lemma}
        \label{Lemma_low_rank_bound}
        Let $\alpha \in [1/2, 1]$. Given a positive semidefinite Toeplitz matrix $T \in \mathbb{R}^{d \times d}$ and the sparse ruler $R_{\alpha}$, 
        for all $r \leq d$ the following inequality holds
        \begin{equation}
                \|T_{R_{\alpha}}\|_2^2
                \leq \dfrac{32 r^2}{d^{2-2\alpha}} \cdot \|T\|_2^2 + 8 \lambda(r, T),
        \end{equation}
        where
        \begin{equation}
                \lambda(r, T) = \min \left\{ \|T - T_{r}\|_2^2,\ \dfrac{2 \|T - T_{r}\|_F^2}{d^{1-\alpha}} \right\}.
        \end{equation}
        Here, $T_{r}$ denotes the best rank-$r$ approximation of $T$, i.e.,
        \begin{equation}
                T_{r} = \arg \min_{\mathrm{rank}(M)=r} \|T - M\|_F = \arg \min_{\mathrm{rank}(M)=r} \|T - M\|_2.
        \end{equation}
\end{lemma}

Lemma~\ref{Lemma_low_rank_bound} shows that for a rank-$r$ Toeplitz matrix $T$, the operator norm of its principal submatrix $T_{R_{\alpha}}$ can be bounded by $\|T\|_2$, with the bound explicitly depending on $r$.
Inspired by Lemma~\ref{Lemma_low_rank_bound}, a slight modification to the proof procedure of Theorem~\ref{Theorem_bound_operator_norm} leads to the following result.

\begin{theorem}
        \label{Theorem_low_rank}
        Under the setting of Problem~\ref{Problem_QTCE}, consider the low-rank scenario where $\mathrm{rank}(T) = r$.
        For any $\delta \in (0, 1]$ and $\alpha \in [1/2, 1]$, given a fixed quantization level $\Delta > 0$, a prescribed ruler $R_\alpha$, and assuming the use of triangular dithering,
        if $n \gtrsim \log(d / \delta)$, then there exists a universal constant $C > 0$ such that, with probability at least $1 - \delta$,
        \begin{equation}
                \label{Eq_low_rank}
                \| T - \hat{T} \|_2 \leq C \mathcal{K}_r \sqrt{\dfrac{\log(d / \delta) \cdot \max\left\{ d^{2 - 2\alpha}, d^{1-\alpha} \log(d) \right\}}{n}},
        \end{equation}
        where $\mathcal{K}_r = \dfrac{24 r}{d^{1-\alpha}} \|T\|_2 + 8 \Delta^2$.
\end{theorem}

\begin{proof}
        This proof closely follows the derivation of Theorem~\ref{Theorem_bound_operator_norm}, with two key refinements:
        we tighten the sub-Gaussian norm bound of $\dot{x}_j^{(l)}$ and incorporate Lemma~\ref{Lemma_low_rank_bound} to leverage the low-rank structure.
        Detailed arguments are provided in Appendix~\ref{Proof_low_rank}.
\end{proof}

It follows from Theorem~\ref{Theorem_low_rank} that the condition $6 r < d^{1 - \alpha}$ allows the low-rank property to be more effectively exploited when a sparser ruler is employed.
This indicates that, in certain scenarios, low-rankness can substantially improve the convergence rate.
When $6 r < d^{1 - \alpha}$ holds, the low-rank structure yields an accelerated convergence rate, providing notable practical benefits.
Moreover, Theorem~\ref{Theorem_low_rank} also shows that low-rankness mainly affects the constant factor in the convergence rate (i.e., $\mathcal{K}_r$) rather than its asymptotic order.

\subsection{Banded Case}
We now turn our attention to the banded case, a fundamental property that has been extensively studied in the context of Toeplitz covariance estimation \cite{gray2006toeplitz}.
In this setting, we assume that $T$ is banded, meaning that only a limited number of diagonals near the main diagonal contain non-zero entries.
Specifically, for the Toeplitz matrix $T = \text{Toep}(\boldsymbol{a})$, we define $T$ to be banded if and only if there exists a positive integer $0 < m < d$ such that $a_s = 0$ for all $s \geq m$.
A banded Toeplitz matrix naturally exhibits sparsity.
Notably, a common approach to estimating sparse matrices is the thresholding procedure, as discussed in \cite{maly2022new, chen2023quantizing}.
In this section, we leverage this technique to develop effective estimation methods tailored to the banded case.

Compared to the low-rank case, the banded case presents additional challenges.
While the results for the low-rank case can be derived with a slight modification of the proof of Theorem~\ref{Theorem_bound_operator_norm}, handling the banded case requires a more intricate approach.
As mentioned before, the sparse case can be addressed using the thresholding method.
Specifically, we consider the thresholded estimator $\breve{T}_{\zeta} := \mathcal{T}_{\zeta}(\hat{T})$ with a given threshold $\zeta$, where
\begin{equation}
        \mathcal{T}_{\zeta}(a) = a \cdot \mathbb{I}(|a| \geq \zeta),
\end{equation}
and $\mathcal{T}_{\zeta}(\hat{T})$ denotes entry-wise thresholding applied to the matrix $\hat{T}$.
In the following, we analyze the error bound of $\breve{T}_{\zeta}$.

We begin by analyzing the estimation error associated with the zero diagonals. 
Specifically, for a Toeplitz matrix $T = \text{Toep}(\boldsymbol{a})$, the ($s+1$)-th diagonal being zero implies that $a_s = 0$. 
By applying Theorem~\ref{Theorem_bound_infty_norm}, we know that if $n \gtrsim \log (d / \delta)$, there exists a constant $C_1 > 0$ such that
\begin{equation}
        \| T - \hat{T} \|_{\infty} \leq C_1 \mathcal{K} \sqrt{\dfrac{\log(d / \delta)}{n}},
\end{equation}
with probability at least $1-\delta$.
Let $\zeta = C_2 \mathcal{K} \sqrt{\dfrac{\log(d / \delta)}{n}}$ with a suitable $C_2 \in (C_1, 2C_1)$ and define $\breve{T}_{\zeta} = \text{Toep}(\breve{\boldsymbol{a}})$.
We consider two cases:
\begin{itemize}
        \item If $|\hat{a}_s| < \zeta$, then $\breve{a}_s = 0$.
        In this case,
        \begin{equation}
                |a_s - \breve{a}_s| = |a_s| 
                \leq |a_s - \hat{a}_s| + |\hat{a}_s| 
                \leq (C_1 + C_2) \mathcal{K} \sqrt{\dfrac{\log(d / \delta)}{n}}.
        \end{equation}

        \item If $|\hat{a}_s| \geq \zeta$, then $\breve{a}_s = \hat{a}_s$. 
        In this case,
        \begin{equation}
                |a_s - \breve{a}_s| = |a_s - \hat{a}_s| \leq C_1 \mathcal{K} \sqrt{\dfrac{\log(d / \delta)}{n}}.
        \end{equation}
        Additionally, note that
        \begin{equation}
                |a_s| \geq |\hat{a}_s| - |a_s - \hat{a}_s| \geq (C_2 - C_1) \mathcal{K} \sqrt{\dfrac{\log(d / \delta)}{n}}.
        \end{equation}
        which implies that
        \begin{equation}
                |a_s - \breve{a}_s| \leq \dfrac{C_1}{C_2 - C_1} |a_s|.
        \end{equation}
\end{itemize}
Combining the above results, we conclude
\begin{equation}
        \label{Eq_relation_a_breve}
        \mathbb{P}\left\{ |a_s - \breve{a}_s| \leq C_3 \min \left(|a_s|,\ \mathcal{K} \sqrt{\dfrac{\log(d / \delta)}{n}}\right)\right\} \geq 1 - \delta,
\end{equation}
where
\begin{equation}
        \label{Eq_relation_c_3}
        C_3 \geq \max\left\{ 2C_2, \dfrac{C_1}{C_2 - C_1} \right\} > 1.
\end{equation}

By leveraging the property \eqref{Eq_relation_a_breve}, the estimation error of $\breve{T}_{\zeta}$ can be effectively controlled.
For a banded Toeplitz matrix $T = \text{Toep}(\boldsymbol{a})$ with $a_s = 0$ for all $s \geq m$, this property significantly reduces the degrees of freedom in the estimation process.
Such a structural simplification enables the threshold estimator $\breve{T}_{\zeta}$ to achieve higher accuracy with fewer samples or lower resolution compared to the general case.
Specifically, we derive a non-asymptotic bound on the estimation error $\|T - \breve{T}_{\zeta}\|_2$ under the operator norm as follows.

\begin{theorem}
        \label{Theorem_banded}
        Under the conditions in Theorem~\ref{Theorem_bound_infty_norm},
        further suppose that $T$ is a bandwidth-$m$ Toeplitz matrix, i.e., $T = \mathrm{Toep}(\boldsymbol{a})$ with $a_s = 0$ for all $s \geq m$.
        For any $\alpha \in [1/2, 1]$ and any $p$ satisfying
        \begin{equation}
                p>1,\quad \text{and}\quad \dfrac{p}{1+2p} \leq \log(d),
        \end{equation}
        consider the ruler $R_\alpha$ and the thresholding estimator $\breve{T}_{\zeta} = \mathcal{T}_{\zeta}(\hat{T})$ with threshold
        \begin{equation}
                \zeta = C \mathcal{K} \sqrt{\dfrac{(2p + 1) \log (d)}{n}},
        \end{equation}
        for some universal constant $C > 0$.
        If $n  \gtrsim (2p + 1) \log (d)$, then with probability at least $1 - e^{-p}$ it holds that
        \begin{equation}
                \label{Eq_bound_band_m_in_theorem}
                \|T - \breve{T}_\zeta \|_2 \leq C m \mathcal{K} \sqrt{\dfrac{\log(d / \delta)}{n}}
        \end{equation}
        where $\mathcal{K} = 4 \|T\|_2 + 8 \Delta^2$.
\end{theorem}

\begin{proof}
        In this proof, we first bound $\mathbb{E} \|T - \breve{T}_{\zeta}\|^p$ using Theorem~\ref{Theorem_bound_infty_norm}, together with the previous analysis. 
        Subsequently, applying Markov's inequality leads directly to the desired result in \eqref{Eq_bound_band_m_in_theorem}.
\end{proof}

The bound established in Theorem~\ref{Theorem_banded} is comparable to that in \cite[Theorem 4]{chen2023quantizing}.
This result indicates that leveraging the band-limited property can further improve the estimation accuracy of the Toeplitz estimator.

\begin{remark}
        Theorem~\ref{Theorem_banded} primarily applies to the scenario where $T$ is known to be band-limited, but \textit{the exact bandwidth $m$ is not specified}.
        In the case that the bandwidth $m$ is known, we can define the estimator as
        \begin{equation}
                \label{BandedE}
                \breve{T} = \text{Toep}(\breve{\boldsymbol{a}}),
        \end{equation}
        where the components of $\breve{\boldsymbol{a}}$ are given by
        \begin{equation}
                \breve{a}_s = \left\{
                        \begin{aligned}
                                &\hat{a}_s,\quad &s < m\\
                                &0,\quad &s \geq m
                        \end{aligned}
                        \right.
        \end{equation}
        The estimator $\breve{T}$ defined in (\ref{BandedE}) also satisfies Theorem~\ref{Theorem_banded}.
\end{remark}

It is worth noting that the results in Theorems~\ref{Theorem_low_rank} and \ref{Theorem_banded} can be naturally extended to the finite-bit setting by appropriately selecting $\Delta$, as discussed in Section~\ref{sec:fini}.

\section{Numerical Results} \label{sec:nume}

We validate our theoretical findings with numerical experiments in this section.
Specifically, we use the Vandermonde decomposition to randomly generate Toeplitz matrices \cite{yang2016vandermonde}.
To be precise, generating a $d$-dimensional Toeplitz matrix involves the following steps for $k \leq d$:
\begin{enumerate}
    \item Sampling $k$ distinct frequencies from $\mathcal{U}([0,1])$ and constructing a Fourier matrix $F_s \in \mathbb{C}^{d \times k}$;

    \item Sampling $k$ independent values from $\mathcal{N}(0,1)$, taking their absolute values as the amplitudes, and then forming a diagonal matrix $D$;

    \item Constructing the complex Toeplitz matrix as $T_c = F_s D F_s^*$.
\end{enumerate}
We take $T = \text{real}(T_c)$ for numerical experiments.
It should be noted that $\text{rank}(T) = \min \{ d, 2k\}$ at this point.
In the subsequent experiments, Toeplitz matrices generated in this manner will be used without further elaboration.

\begin{figure}[htbp]
    \centering
    \includegraphics[scale=0.44]{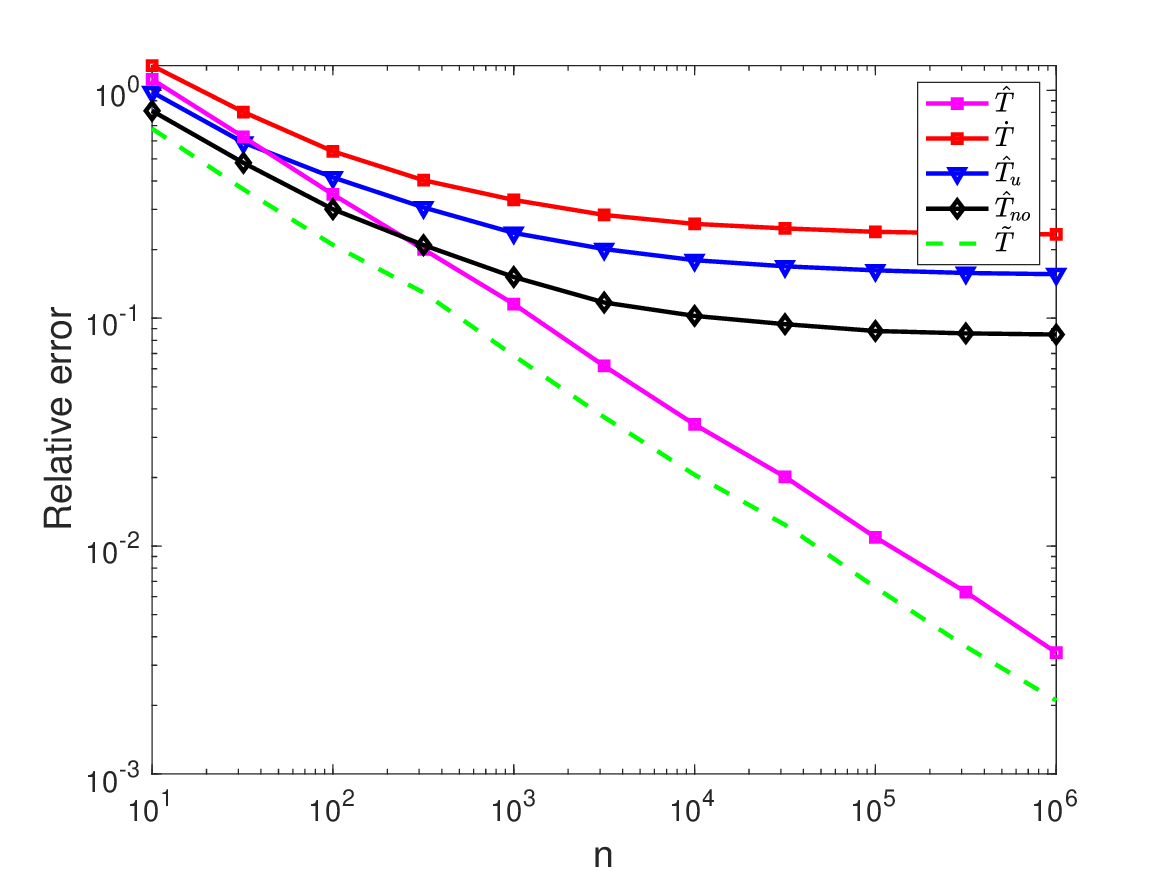}
    \caption{Log-log curve of the relative error $\frac{\| \hat{T} - T\|}{\|T\|}$ with respect to the number of samples $n$}
    \label{LoglogRvN}
\end{figure}

In \textit{Experiment 1}, a symmetric Toeplitz matrix $T \in \mathbb{R}^{16 \times 16}$ is randomly generated.
The quantization level is set to $\Delta = 5$, and the ruler $R_{1/2} = \{1, 2, 3, 4, 8, 12, 16\}$ is employed to estimate $T$ using several estimators:
\begin{itemize}
    \item{The proposed estimator $\hat{T}$: Triangular dither is applied, and $\frac{\Delta^2}{4} I_d$ is subtracted to mitigate the error introduced by the triangular dither.}

    \item{$\dot{T}$: Triangular dither is applied without $\frac{\Delta^2}{4} I_d$ to mitigate the error introduced by the triangular dither.}
    
    \item{$\hat{T}_u$: Uniform dither is applied, and $\frac{\Delta^2}{6} I_d$ is subtracted to mitigate the error introduced by the uniform dither\footnote{
        When uniform dithering is applied, the variance of the corresponding quantization noise $\xi_i$ cannot be computed exactly in theory.
        For simplicity, we \textit{approximately} ignore the correlation between the dither $\tau_i$ and the quantization error $\omega_i$, and \textit{unjustifiably} assume $\mathbb{E}[\xi_i^2] \approx \mathbb{E}[\tau_i^2] + \mathbb{E}[\omega_i^2] = \frac{\Delta^2}{6}$.
        }.}
    
    \item{$\hat{T}_{no}$: Quantization is applied without any dithering.}
    
    \item{$\tilde{T}$: No quantization is applied, serving as the benchmark.}
\end{itemize}
We plot the log-log curve of the relative error versus the number of samples $n$ in Fig.~\ref{LoglogRvN}.
The relative error of the proposed estimator decreases almost linearly with $n$ in the log-log plot.
In contrast, when uniform dither is used, or triangular dither is applied without correction, the relative error approaches a plateau, showing only marginal improvement as the number of samples increases beyond a certain point.
These results demonstrate that the proposed estimator $\hat{T}$ can achieve arbitrary accuracy, even when relying solely on quantized coarse data.
This highlights the superiority of $\hat{T}$ in handling quantized data while maintaining high estimation accuracy.
Moreover, the slope of the curve for $\hat{T}$ is approximately $-0.5007$, in close agreement with the theoretical convergence rate predicted by Theorem~\ref{Theorem_bound_operator_norm}.

\begin{figure}[htbp]
    \centering
    \includegraphics[scale=0.44]{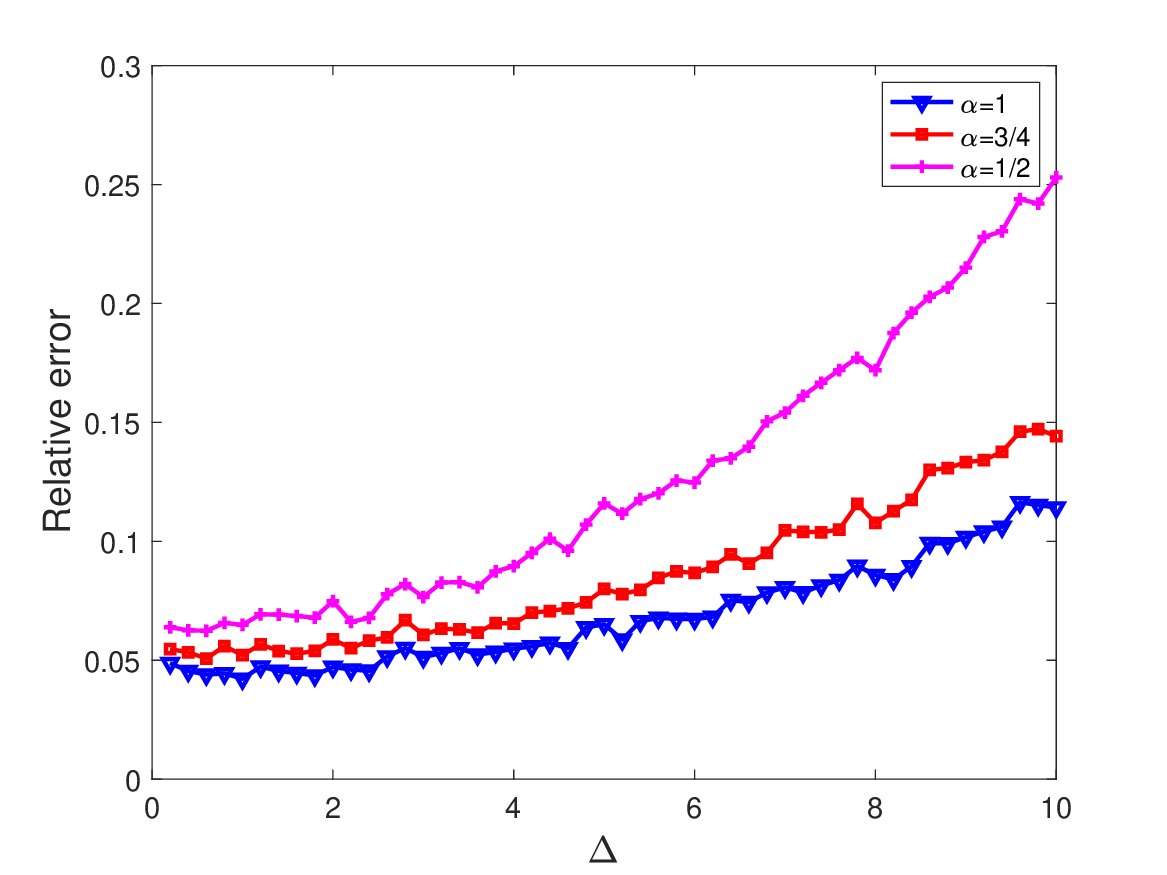}
    \caption{The error variation curves corresponding to different quantization levels $\Delta$}
    \label{PlotRvDel}
\end{figure}

In \textit{Experiment 2}, we further investigate the impact of quantization level $\Delta$ on the estimation error for 16-dimensional symmetric Toeplitz matrices.
Specifically, we fix the number of samples at $n = 1000$ and evaluate the performance of the estimator $\hat{T}$ under three different rulers:
\begin{itemize}
    \item The sparse ruler $R_{1/2} = \{1, 2, 3, 4, 8, 12, 16\}$.

    \item The sparse ruler $R_{3/4} = \{1, 2, 3, 4, 5, 6, 7, 8, 10, 12, 14, 16\}$.

    \item The full ruler $R_1 = \{1, 2, \ldots, 16\}$.
\end{itemize}
The error variation curves corresponding to different quantization levels $\Delta$ are plotted in Fig.~\ref{PlotRvDel}.
The results reveal that as $\Delta$ increases, the error initially varies more gently in the early stages and grows more noticeably in the later stages of the curve.
This trend aligns closely with the theoretical description of $\mathcal{L}$ in our analysis.
These findings indicate that, within a certain range, increasing $\Delta$, i.e., reducing data resolution, has a relatively limited impact on estimation accuracy.
This demonstrates that the estimator maintains robustness against moderate reductions in data resolution, further supporting the feasibility of using quantized data for efficient covariance estimation.

\begin{figure}[htbp]
    \centering
    \includegraphics[scale=0.44]{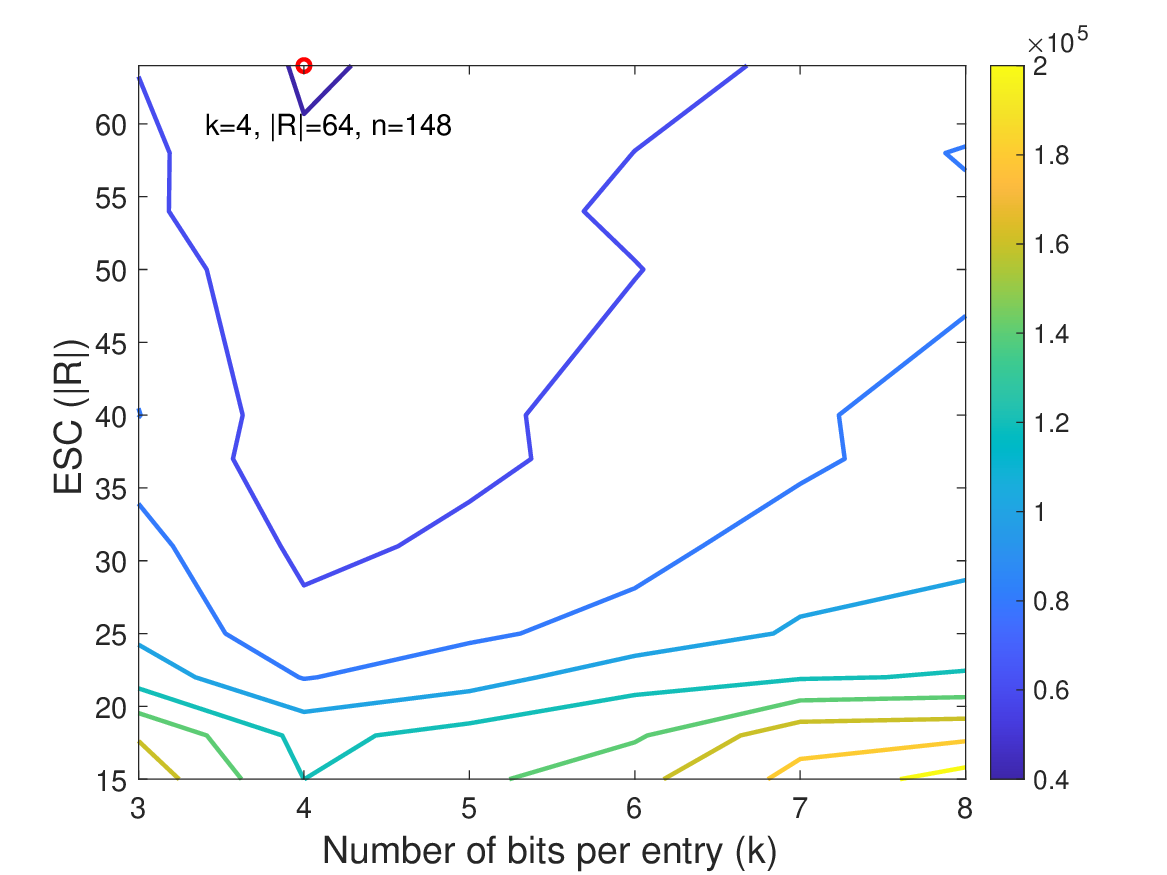}
    \caption{Contour map of the required total number of bits as a function of $k$ and ESC for achieving the preset accuracy $\epsilon = 0.1$}
    \label{PlotESCvsk}
\end{figure}

\begin{figure}[htbp]
    \centering
    \subfloat[VSC v.s. $k$]{
    \includegraphics[scale=0.44]{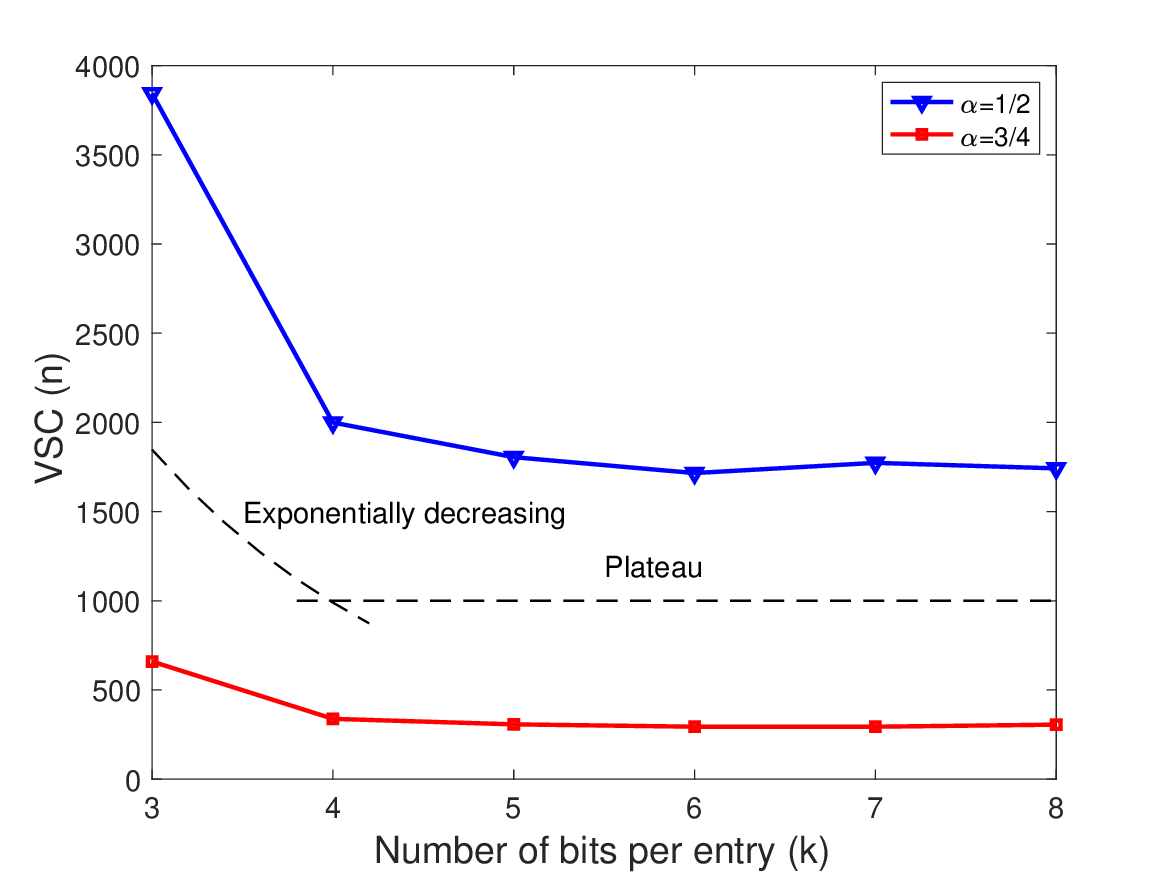}
    }
    \subfloat[VSC v.s. ESC]{
    \includegraphics[scale=0.44]{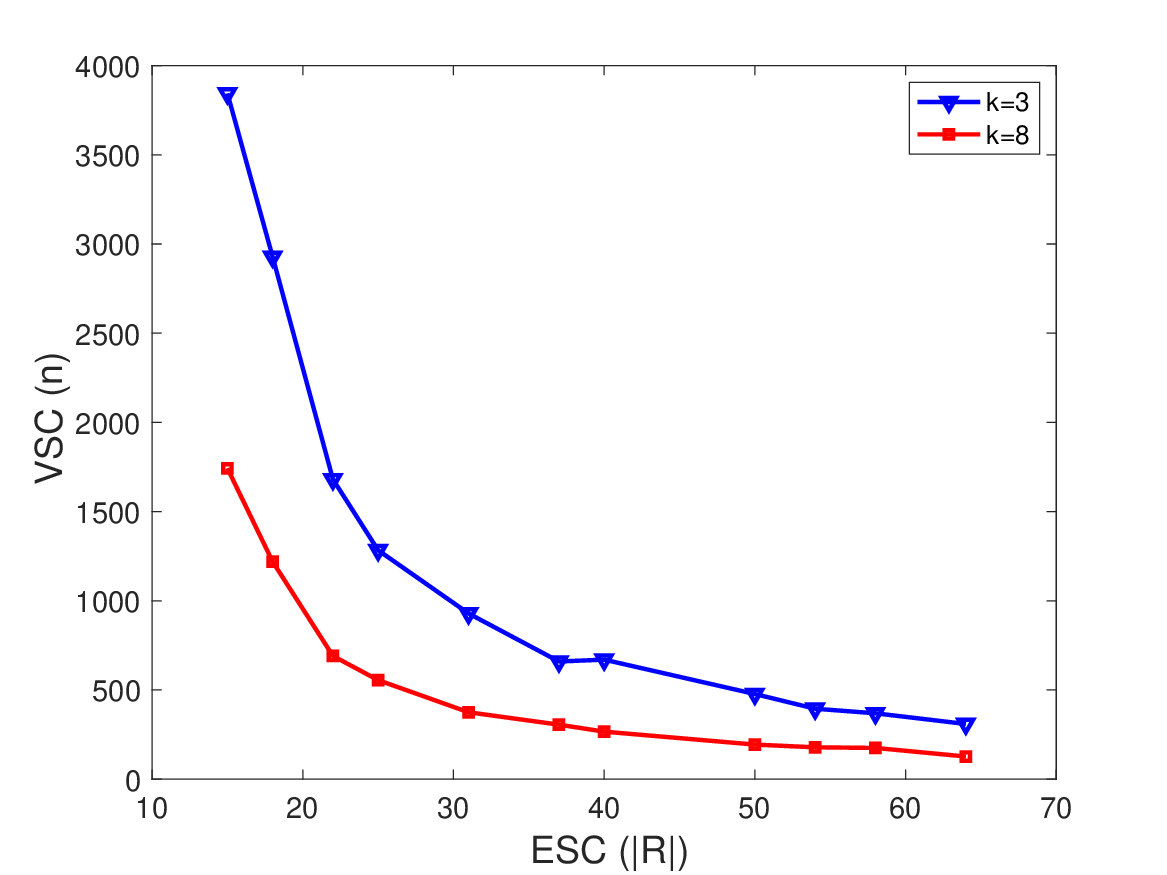}
    }
    \caption{Curve of VSC versus $k$ and ESC to achieve the preset accuracy $\epsilon = 0.1$}
    \label{PlotTradeoff}
\end{figure}

In \textit{Experiment 3}, we randomly generate 64-dimensional symmetric Toeplitz matrices to study how the quantization bit $k$ and the choice of ruler affect estimation performance.
We vary $k$ and $\alpha \in [1/2, 1]$, and for each pair $(k, \alpha)$, we record the minimum sample size $n$ required to achieve $\|T-\hat{T}\|_2 \leq 0.1 \|T\|_2$.
Fig.~\ref{PlotESCvsk} presents a contour map of the required total number of bits as a function of $k$ and ESC.
It can be seen that using the full ruler together with 4-bit quantization per entry minimizes the total number of bits needed for accurate estimation of $T$.
Fig.~\ref{PlotTradeoff} shows several cross-sections of this contour.
For any fixed $\alpha$, when $k$ is small (e.g., $k = 3$), increasing $k$ substantially reduces the required sample size $n$ (i.e., VSC); however, after a certain threshold, further increasing $k$ yields little improvement.
Conversely, for fixed $k$, increasing $\alpha$ (i.e., increasing ESC) clearly decreases the required VSC.
These results validate the detailed trade-off among VSC, ESC, and resolution.

\begin{figure}[htbp]
    \centering
    \includegraphics[scale=0.44]{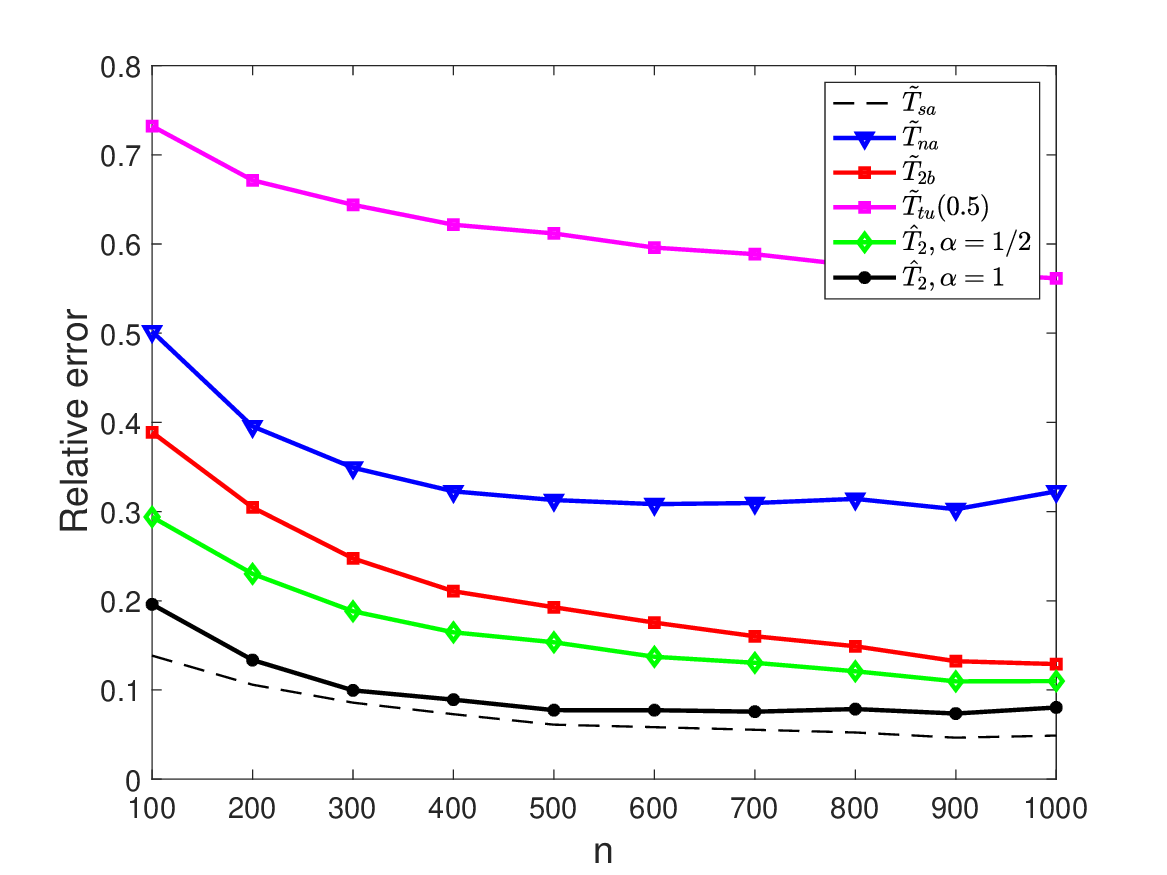}
    \caption{Relative error versus VSC ($n$) for different covariance estimators}
    \label{PlotCompare}
\end{figure}

In \textit{Experiment 4}, we randomly generate 16-dimensional Toeplitz matrices and compare our proposed estimator with several existing 2-bit estimators.
$\tilde{T}_{sa}$ denotes the sample covariance computed from full-resolution data and serves as the benchmark.
$\tilde{T}_{na}$ refers to the estimator in \cite{dirksen2022covariance}.
$\tilde{T}_{2b}$ and $\tilde{T}_{tu}(0.5)$ are both from \cite{chen2025parameter}, where the former is the standard 2-bit estimator and the latter is its parameter-free version with $\lambda = 0.5$.
$\hat{T}_2$ represents our proposed 2-bit estimator, evaluated under different rulers.
All optimal parameters are determined numerically for the case $(n, d) = (500, 16)$.
The results show that in all scenarios, the proposed estimator outperforms the others, even when only partial entries are observed ($\alpha = 1/2$).
This performance gain is attributed to the exploitation of the Toeplitz structure.

\begin{figure}[htbp]
    \centering
    \subfloat[Full rank]{
    \includegraphics[scale=0.44]{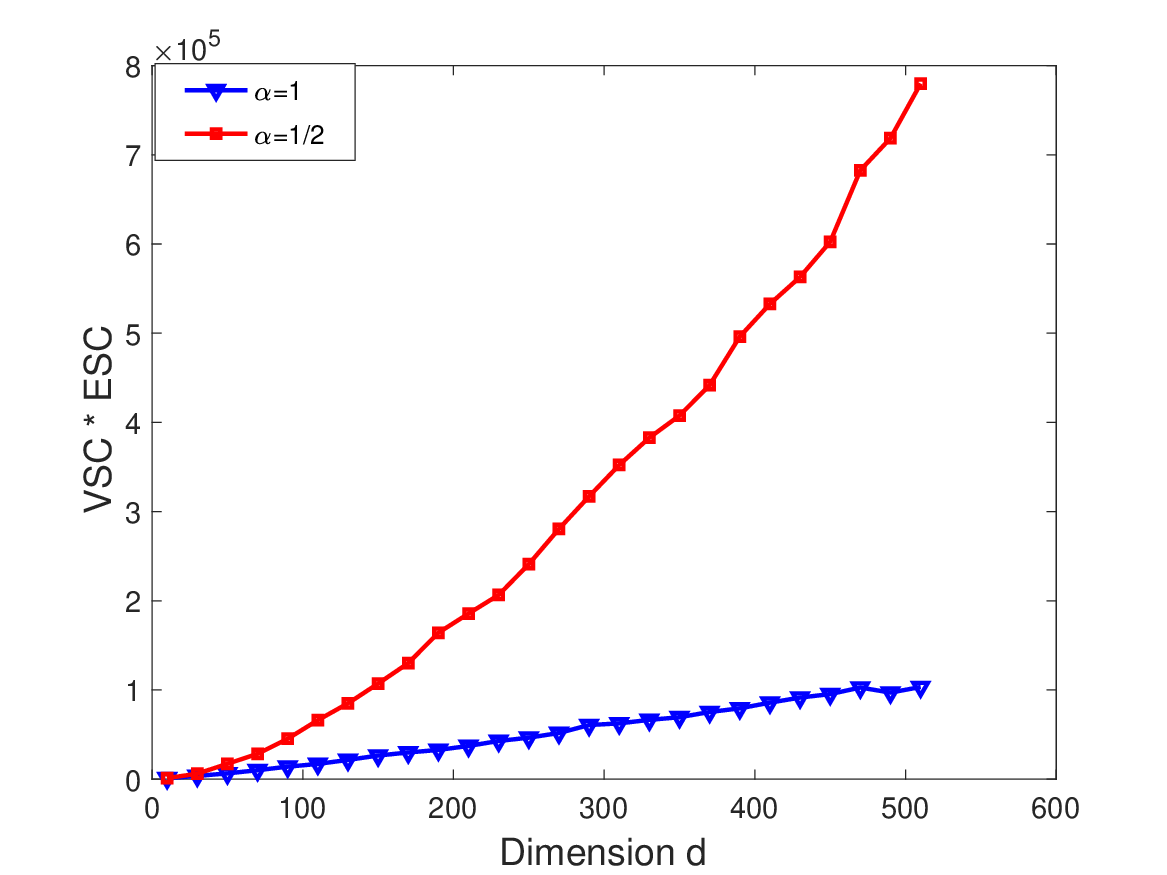}
    }
    \subfloat[$\text{rank}(T) = 10$]{
    \includegraphics[scale=0.44]{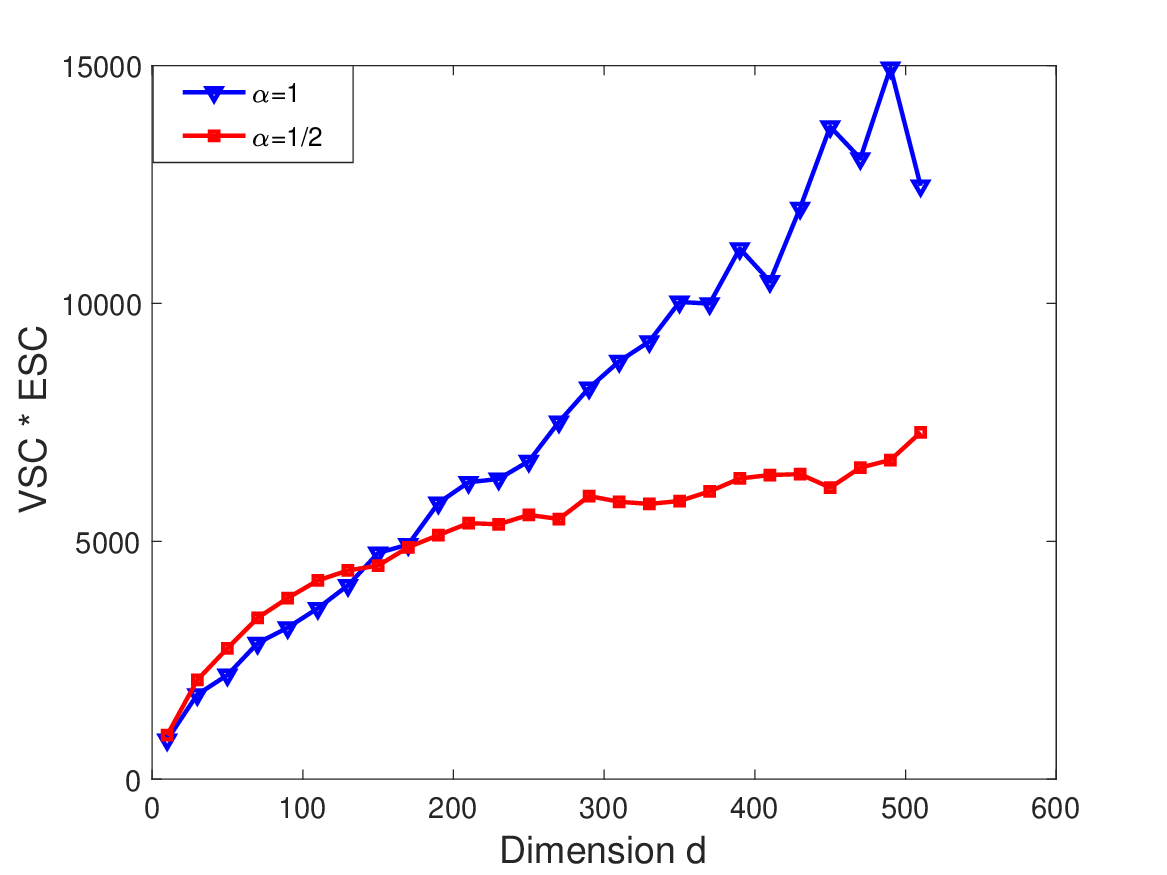}
    }
    \caption{Curve of total sample complexity with respect to data dimensionality required to achieve the preset accuracy $\epsilon = 0.1$}
    \label{PlotRvD}
\end{figure}

In \textit{Experiment 5}, we investigate the trade-off between VSC and ESC under a preset estimation accuracy of $\epsilon = 0.1$ and a fixed quantization level of $\Delta = 2$.
Specifically, we evaluate the total sampling complexity, defined as the product of VSC and ESC, for two different rulers: the sparse ruler $R_{1/2}$ and the full ruler $R_1$.
The variation of total sampling complexity versus matrix dimension is plotted in Fig.~\ref{PlotRvD}.
The results reveal contrasting trends between full-rank and low-rank matrices.
For the full-rank case, while the sparse ruler $R_{1/2}$ effectively reduces the ESC, it leads to a significant increase in VSC, resulting in a sharp rise in total sampling complexity.
In contrast, for low-rank matrices, where the rank is restricted to $10$, the sparse ruler $R_{1/2}$ significantly outperforms the full ruler $R_1$ in terms of total sampling complexity when $d$ is greater than $200$.
This highlights the advantage of using sparse rulers with low ESC in the covariance estimation of low-rank Toeplitz matrices.

\begin{figure}[htbp]
	\centering
	\includegraphics[scale=0.44]{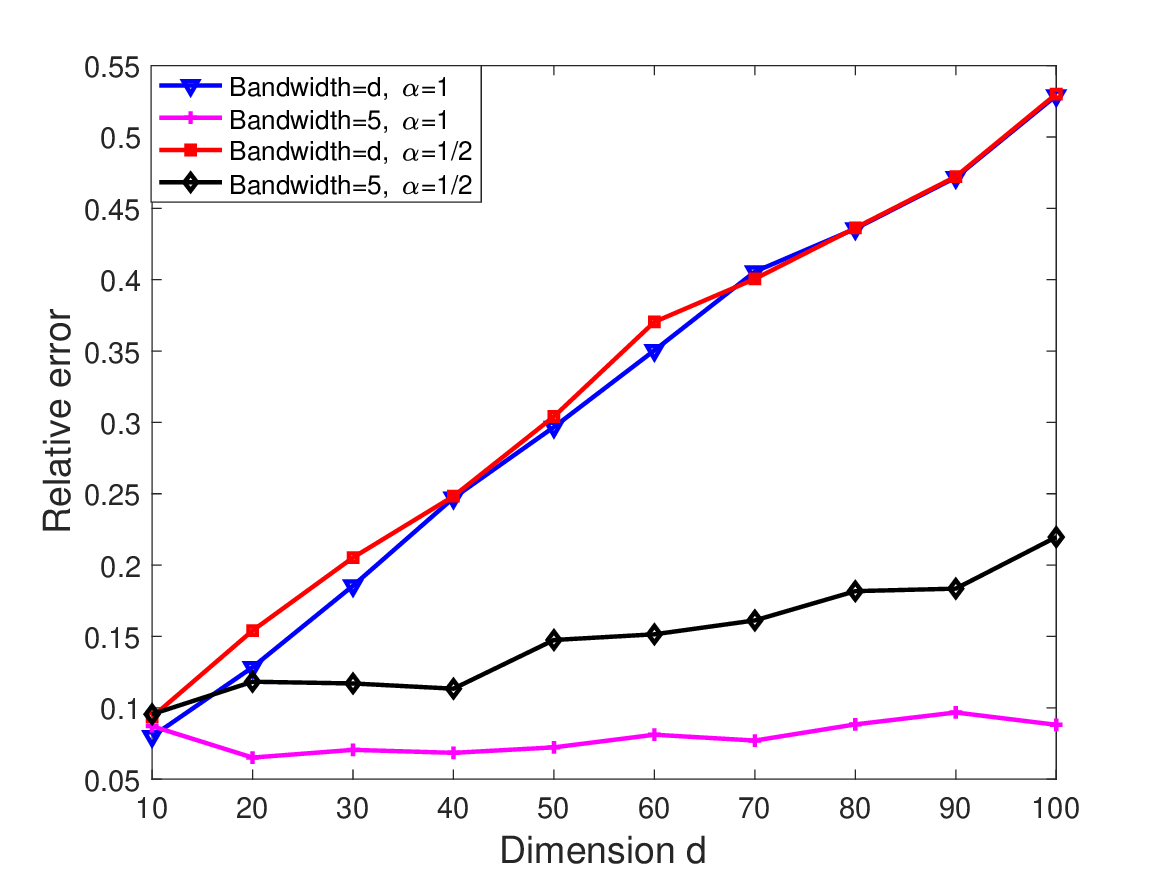}
	\caption{Curve of the relative error versus dimension $d$ for the band-limited case}
	\label{PlotRvD5}
\end{figure}

In \textit{Experiment 6}, we examine the behavior of $\breve{T}_{\zeta}$ as defined in Theorem~\ref{Theorem_low_rank} within the context of a band-limited case.
The sample size is set to $n=1000$, and we assess the performance of the estimator $\breve{T}_{\zeta}$ across various dimensions $d$.
For the band-limited case, the bandwidth is fixed at $m=5$.
As shown in Fig.~\ref{PlotRvD5}, the results indicate that the band-limited case achieves a smaller estimation error for the same number of samples.
Furthermore, once the bandwidth $m$ is specified, the estimation error is essentially independent of $d$.
These findings underscore the effectiveness of $\breve{T}_{\zeta}$ in estimating the band-limited Toeplitz covariance matrix.

Overall, the results of our numerical experiments further validate the effectiveness of the proposed estimator $\hat{T}$.
The experiments confirm that the convergence rate of $\hat{T}$ with respect to the number of samples $n$ is indeed $1/2$, aligning well with the theoretical predictions.
Moreover, the results indicate that reducing the resolution of the data within a certain range has a relatively limited impact on the estimation error, demonstrating the robustness of the estimator under quantized data.
Additionally, the trade-offs between VSC, ESC, and resolution are thoroughly examined through the numerical experiments.
The findings highlight the intricate dependencies among these factors, illustrating how adjustments in one factor can be compensated by changes in others to achieve the desired estimation accuracy.
These observations reinforce the practical value of our proposed estimator in balancing efficiency and accuracy in covariance estimation tasks.

\section{Conclusion} \label{sec:conc}
In this paper, we propose a ruler-based quantized Toeplitz covariance estimator capable of achieving highly accurate estimation of $T$ by observing only a subset of each quantized coarse sample.
We derive non-asymptotic error bounds for the estimator and further analyze the convergence rate  of $\hat{T}$.
Both theoretical analysis and experimental results confirm that the convergence order of the proposed estimator is $1/2$.
Additionally, we demonstrate that reducing the resolution of the data within a certain range has a limited impact on the estimation accuracy.
Our findings highlight the trade-offs between VSC, ESC and resolution, offering a balanced approach to covariance estimation with limited sample information.
These results provide valuable insights and practical guidelines for achieving efficient and accurate covariance estimation in scenarios where data acquisition and processing resources are constrained.

Our exploration is based on Gaussian assumptions, which can be easily generalized to the sub-Gaussian case.
However, extending this work to more general cases, such as heavy-tailed distributions, is a potential direction for future research.
Furthermore, this paper focuses exclusively on Toeplitz covariance estimators with simple computational structures.
While more sophisticated covariance estimators based on advanced algorithms already exist, it would be interesting to investigate how their performance deteriorates under quantization and to explore strategies for mitigating such degradation in practically relevant settings.

\section*{Acknowledgment}
The authors would like to thank Prof. Michael Kwok-Po Ng of Hong Kong Baptist University for helpful comments that have improved the quality of the paper.

\appendix

\subsection{Proof of Theorem~\ref{Theorem_bound_infty_norm}} \label{Proof_bound_infty_norm}
We begin by recalling Bernstein's inequality, which provides a concentration bound for the sum of independent sub-exponential random variables.

\begin{lemma}
        \label{Lemma_bernstein_inequality}
        (Bernstein's inequality, \cite{vershynin2018high})
        Let $X_1, \ldots, X_n$ be independent sub-exponential random variables. Then, for any $t > 0$, it holds that
        \begin{equation}
                \mathbb{P}\left\{\left| \dfrac{1}{n}\sum_{i=1}^{n} X_i - \mathbb{E}(X_i)\right| \geq t \right\}
                \leq 2 \exp{\left[-cn \min \left( \dfrac{t^2}{K^2}, \dfrac{t}{K}\right) \right]},
        \end{equation}
        where $K = \max_{1 \leq i \leq n} \|X_i\|_{\psi_1}$ and $c > 0$ is a constant.
\end{lemma}

Now, we turn to the proof of Theorem~\ref{Theorem_bound_infty_norm}. For any $s \in [d]$, we begin by analyzing the deviation $|a_s - \hat{a}_s|$, which can be expressed as
\begin{equation}
        \label{Eq_deviation_of_a}
        |a_s - \hat{a}_s| 
        = \left|a_s + \dfrac{\Delta^2}{4} \delta_s - \dot{a}_s\right|
        \leq \dfrac{1}{|R_s|} \sum_{(j, k)\in R_s} \left| \dfrac{1}{n} \sum_{l = 1}^n \dot{x}_j^{(l)} \dot{x}_k^{(l)} - \mathbb{E} \left(\dot{x}_j^{(l)} \dot{x}_k^{(l)}\right) \right|.
\end{equation}
To bound \eqref{Eq_deviation_of_a}, we consider any pair $(j, k) \in R_s$. We begin by bounding
\begin{equation*}
        \left| \dfrac{1}{n} \sum_{l = 1}^n  \dot{x}_j^{(l)} \dot{x}_k^{(l)} - \mathbb{E} \left(\dot{x}_j^{(l)} \dot{x}_k^{(l)}\right)\right|.
\end{equation*}
By noting that both $\omega^{(l)}_j$ and $\tau^{(l)}_j$ are bounded random variables, and by applying Hoeffding's lemma\cite{hoeffding1963probability, vershynin2018high}, we obtain
\begin{equation}
        \label{Eq_sub_gaussian_norm_of_x}
        \begin{aligned}
                \| \dot{x}^{(l)}_j \|_{\psi_2}
                = \| x^{(l)}_j + \omega^{(l)}_j + \tau^{(l)}_j \|_{\psi_2}
                \leq \| x^{(l)}_j \|_{\psi_2} + \| \omega^{(l)}_j \|_{\psi_2} + \| \tau^{(l)}_j \|_{\psi_2}
                \leq \sqrt{2 T_{j,j}} + 2\Delta
                \leq \sqrt{2} \|T\|_2^{1/2} + 2\Delta.
        \end{aligned}
\end{equation}
This implies that both $\dot{x}_j^{(l)}$ and $\dot{x}_k^{(l)}$ are sub-Gaussian random variables. Hence, their product $\dot{x}_j^{(l)} \dot{x}_k^{(l)}$ is sub-exponential and satisfies
\begin{equation}
        \label{Eq_sub_exp_norm}
        \begin{aligned}
                 \| \dot{x}_j^{(l)} \dot{x}_k^{(l)} \|_{\psi_1}
                 \leq \| \dot{x}_j^{(l)} \|_{\psi_2} \cdot \| \dot{x}_k^{(l)} \|_{\psi_2}
                 \leq \left(\sqrt{2} \|T\|_2^{1/2} + 2 \Delta\right)^2
                 \leq 4 \|T\|_2 + 8 \Delta^2.
        \end{aligned}
\end{equation}

By Lemma~\ref{Lemma_bernstein_inequality}, there exists a universal constant \( c > 0 \) such that
\begin{equation}
        \begin{aligned}
                \mathbb{P}\left\{ \left| \dfrac{1}{n}\sum_{l=1}^{n} \dot{x}_j^{(l)} \dot{x}_k^{(l)} - \mathbb{E}\left(\dot{x}_j^{(l)} \dot{x}_k^{(l)}\right)\right| \geq t\right\}
                \leq 2 \exp{\left[-cn \min \left( \dfrac{t^2}{\mathcal{K}^2}, \dfrac{t}{\mathcal{K}}\right) \right]},
        \end{aligned}
\end{equation}
where $\mathcal{K} = 4 \|T\|_2 + 8 \Delta^2$. Thus, for \textit{the given index $s$}, we have
\begin{equation}
        \begin{aligned}
                \mathbb{P}\left\{ \dfrac{1}{|R_s|} \sum_{(j, k)\in R_s} \left| \dfrac{1}{n} \sum_{l = 1}^n \dot{x}_j^{(l)} \dot{x}_k^{(l)} - \mathbb{E} \left(\dot{x}_j^{(l)} \dot{x}_k^{(l)}\right) \right| \geq t \right\} 
                &\leq 2 \exp \left[-cn \min \left( \dfrac{t^2}{\mathcal{K}^2}, \dfrac{t}{\mathcal{K}}\right) \right],
        \end{aligned}
\end{equation}
and therefore,
\begin{equation}
        \label{Eq_error_of_a_s}
        \mathbb{P}\left\{ \left| a_s - \hat{a}_s \right| \geq t \right\} \leq 2 \exp{\left[-cn \min \left( \dfrac{t^2}{\mathcal{K}^2}, \dfrac{t}{\mathcal{K}}\right) \right]}.
\end{equation}
We conclude the proof of \eqref{Eq_bound_infty_norm} by applying a union bound over all $s \in [d]$.
The entrywise bound in \eqref{Eq_bound_entrywise} then follows as a direct corollary of \eqref{Eq_bound_infty_norm} by taking $t \asymp \mathcal{K} \sqrt{\dfrac{\log(d / \delta)}{n}}$.

\subsection{Proof of Theorem~\ref{Theorem_bound_operator_norm}} \label{Proof_bound_operator_norm}
As mentioned earlier, we complete the proof by establishing a uniform bound on the spectral density function associated with $T - \hat{T}$.
To that end, we first introduce the Hanson-Wright inequality, which provides a tail bound for quadratic forms of sub-Gaussian random variables and plays a key role in the subsequent analysis.

\begin{lemma}
        \label{Lemma_hanson_wright_inequality}
        (Hanson-Wright inequality, \cite{vershynin2018high})
        Let $\boldsymbol{X} = (X_1, \ldots, X_d)^\top$ be a random vector with independent, zero-mean, sub-Guassian entries, and let $A \in \mathbb{R}^{d \times d}$ be a fixed matrix. Then, for every $t\geq 0$, it holds that
        \begin{equation}
                \mathbb{P}\left\{ \left| \boldsymbol{X}^\top A \boldsymbol{X} - \mathbb{E} [\boldsymbol{X}^\top A \boldsymbol{X} ] \right| > t \right\}
                \leq 2 \exp\left[ -c \min \left( \dfrac{t^2}{K^4 \| A \|_F^2}, \dfrac{t}{K^2 \|A\|_2} \right) \right],
        \end{equation}
        where $K = \max_{1 \leq i \leq n} \|X_i\|_{\psi_2}$, and $c > 0$ is a constant.
\end{lemma}

Then, we proceed to prove Theorem~\ref{Theorem_bound_operator_norm}.
Note that the deviation $T - \hat{T}$ is also a symmetric Toeplitz matrix. Let $\boldsymbol{e} = \boldsymbol{a} - \hat{\boldsymbol{a}}$ denote its associated generating vector.
Since the distribution of $\boldsymbol{x}$ is altered by the quantization process and thus remains unknown, the subsequent analysis is non-trivial.
We divide the proof into the following three steps.

\subsubsection{The non-asymptotic properties of the element-wise error $e_s$}
According to \eqref{Eq_error_of_a_s}, we get a union bound over $s \in [d]$
\begin{equation}
        \label{Eq_union_bound_of_e_s}
        \begin{aligned}
                \mathbb{P}\left\{ \exists s \in [d]: |e_s| \geq t \right\} 
                \leq 2 d \exp{\left[-cn \min \left( \dfrac{t^2}{\mathcal{K}^2}, \dfrac{t}{\mathcal{K}} \right) \right]}
                \leq 2 d \exp{\left(-cn \kappa \right)},
        \end{aligned}
\end{equation}
where $\kappa = \min \left( \dfrac{t^2}{\mathcal{K}^2 \phi(R)}, \dfrac{t}{\mathcal{K} \sqrt{\phi(R)}} \right)$. This indicates that all the coefficients of $L_{\boldsymbol{e}}(x)$ can be effectively controlled within a desired bound.

\subsubsection{Pointwise bound on $L_{\boldsymbol{e}}(x)$}
For a fixed $x \in [0,1]$, we associate the spectral density function $L_{\boldsymbol{e}}(x)$ with a Toeplitz matrix $M\in \mathbb{R}^{d \times d}$, whose entries are given by
\begin{equation}
        M_{j,k} = \dfrac{\cos{(2\pi s x)}}{|R_s|} := M_s, \quad s = |j-k| \in [d].
\end{equation}
In particular, it holds that
\begin{equation}
        L_{\boldsymbol{e}}(x) = e_0 + 2 \sum_{s=0}^{d-1} e_s \cos{(2\pi s x)} 
        = \text{tr}(T_R - \hat{T}_R, M_R).
\end{equation}
By defining the modified sample covariance matrix
\begin{equation}
        \overline{T} = \dfrac{1}{n} \sum_{l=1}^n \dot{\boldsymbol{x}}^{(l)} \dot{\boldsymbol{x}}^{(l)\top} - \dfrac{\Delta^2}{4} I_d,
\end{equation}
the estimator $\hat{T}$ can be interpreted as the Toeplitz projection of $\overline{T}$. That is, for each lag $s \in [d]$,
\begin{equation}
        \sum_{(j,k)\in R_s} \hat{T}_{j,k} = \sum_{(j,k)\in R_s} \overline{T}_{j,k}.
\end{equation}
Therefore,
\begin{equation}
        \begin{aligned}
                \mathrm{tr}(\hat{T}_R, M_R)
                &= \sum_{(j,k) \in R \times R} \hat{T}_{j,k} M_{j,k}
                = \sum_{s=0}^{d-1} \sum_{(j,k)\in R_s} \hat{T}_{j,k} M_s\\
                &= \sum_{s=0}^{d-1} \sum_{(j,k)\in R_s} \overline{T}_{j,k} M_s
                = \sum_{(j,k) \in R \times R} \overline{T}_{j,k} M_{j,k}
                = \mathrm{tr}(\overline{T}_R, M_R).
        \end{aligned}
\end{equation}
Let $\dot{\Sigma} = \mathbb{E}\left( \dot{\boldsymbol{x}}^{(l)} \dot{\boldsymbol{x}}^{(l)\top} \right) = T + \dfrac{\Delta^2}{4} \boldsymbol{I}_d$. 
Then,
\begin{equation}
        \begin{aligned}
                L_{\boldsymbol{e}}(x)
                = \mathrm{tr}(T_R - \hat{T}_R, M_R)
                = \mathrm{tr}(\dot{\Sigma}_R, M_R) - \dfrac{1}{n} \sum_{l=1}^n \dot{\boldsymbol{x}}_R^{(l)\top} M_R \dot{\boldsymbol{x}}_R^{(l)}.
        \end{aligned}
\end{equation}
We can also verify that
\begin{equation}
        \label{Eq_mean_quadratic_form}
        \mathbb{E}\left( \dot{\boldsymbol{x}}_R^{(l)\top} M_R \dot{\boldsymbol{x}}_R^{(l)} \right) = \mathrm{tr}(\dot{\Sigma}_R, M_R).
\end{equation}
Combining \eqref{Eq_mean_quadratic_form} with Lemma~\ref{Lemma_hanson_wright_inequality}, we obtain
\begin{equation}
        \label{Eq_bound_of_l_x}
        \begin{aligned}
                \mathbb{P}\left\{ \left| L_{\boldsymbol{e}}(x) \right| > t \right\}
                &= \mathbb{P}\left\{ \left| \mathrm{tr}(\dot{\Sigma}_R, M_R) - \dfrac{1}{n} \sum_{l=1}^n \dot{\boldsymbol{x}}_R^{(l)\top} M_R \dot{\boldsymbol{x}}_R^{(l)} \right| > t \right\} \\
                &\leq 2 \exp\left[ -cn \min \left( \dfrac{t^2}{(\sqrt{2} \|T\|_2^{1/2} + 2\Delta)^4 \|M_R\|_F^2}, \dfrac{t}{(\sqrt{2} \|T\|_2^{1/2} + 2\Delta)^2 \|M_R\|_2} \right) \right] \\
                &\leq 2 \exp\left[ -cn \min \left( \dfrac{t^2}{\mathcal{K}^2 \phi(R)}, \dfrac{t}{\mathcal{K} \sqrt{\phi(R)}} \right) \right] \\
                & = 2 \exp{\left(-cn \kappa \right)}
        \end{aligned}
\end{equation}
where $\mathcal{K} = 4 \|T\|_2 + 8\Delta^2$ and $\kappa = \min \left( \dfrac{t^2}{\mathcal{K}^2 \phi(R)}, \dfrac{t}{\mathcal{K} \sqrt{\phi(R)}} \right)$.
The last inequality of \eqref{Eq_bound_of_l_x} holds since
\begin{equation}
        \|M_R\|_2^2 \leq \|M_R\|_F^2 \leq \phi(R).
\end{equation}

\subsubsection{Uniform bound on $L_{\boldsymbol{e}}(x)$}
We consider a $\frac{1}{Qd^2}$-net of the interval $[0,1]$ given by $N = \left\{ 0, \frac{1}{Qd^2}, \frac{2}{Qd^2}, \ldots, 1 \right\}$, and set $Q = 8 \pi$.
Similar to \eqref{Eq_union_bound_of_e_s}, we obtain a union bound over the covering net $x \in N$,
\begin{equation}
        \label{Eq_union_bound_of_l_x}
        \mathbb{P}\left\{ \exists x \in N: \left| L_{\boldsymbol{e}}(x) \right| > \dfrac{t}{2} \right\}
        \leq 3Qd^2 \exp{\left(-cn \kappa \right)}.
\end{equation}
Combining \eqref{Eq_union_bound_of_e_s} and \eqref{Eq_union_bound_of_l_x}, we conclude that the following event
\begin{equation}
        \mathcal{A}:\ \  |e_s| \leq t,\  \forall s\in [d],
        \quad  \text{and}\quad 
        |L_{\boldsymbol{e}} (x)| \leq \dfrac{t}{2},\  \forall x\in N
\end{equation}
holds with probability at least $1 - 4 Q d^2 \exp{\left(-cn \kappa \right)}$.
Note that for any $x \in [0,1]$, we can always find an $x' \in N$ such that $0 \leq x - x' < \dfrac{1}{Qd^2}$.
Condition on the event $\mathcal{A}$, we have
\begin{equation}
        |L_{\boldsymbol{e}}(x)|
        \leq |L_{\boldsymbol{e}}(x')| + |x-x'| \sup_{y\in [x', x]}|L'(y)|
        \leq \dfrac{t}{2} + 4 \pi d^2 t \cdot \dfrac{1}{Qd^2} = t,
\end{equation}
where the last inequality holds because
\begin{equation}
        |L'_{\boldsymbol{e}}(y)|
        = \left|4\pi \sum_{s=1}^{d-1} se_s\sin{(2\pi s y)} \right|
        \leq 4 \pi d^2 \|\boldsymbol{e}\|_{\infty} \leq 4 \pi d^2 t.
\end{equation}
Therefore, we obtain the uniform bound
\begin{equation}
        \mathbb{P}\left\{\forall x \in [0,1]: |L_{\boldsymbol{e}} (x)| \leq t \right\} \geq 1 - 32 \pi d^2 \exp{\left(-cn \kappa \right)},
\end{equation}
which is consistent with \eqref{Eq_bound_spectral_density}.

With the help of Lemma~\ref{Lemma_toeplitz_operator_norm_symbol}, the bound in~\eqref{Eq_bound_spectral_density} immediately implies the desired operator norm bound in~\eqref{Eq_bound_spectral_norm_t}.
The bound in \eqref{Eq_bound_operator_t} then follows as a consequence of \eqref{Eq_bound_spectral_norm_t} by taking $t \asymp \mathcal{K} \sqrt{\dfrac{\phi(R) \cdot \log(d / \delta)}{n}}$, thereby completing the proof.

\subsection{Proof of Lemma~\ref{Lemma_bound_phi_r_alpha}} \label{Proof_bound_phi_r_alpha}
The proof follows from elementary counting arguments, as presented in~\cite{eldar2020sample}.  
For completeness, we provide a brief outline here.

On the one hand, for each distance $s \in [0, \lceil d - d^\alpha \rceil]$, there are at least $\lfloor d^{2\alpha - 1} \rfloor$ pairs $(r_1, r_2)$ such that $r_2 - r_1 = s$, where $r_1 \in R_\alpha^{(1)}$ and $r_2 \in R_\alpha^{(2)}$.
This leads to the bound
\begin{equation}
        \label{Eq_ruler_phi_r_alpha_1}
        \sum_{s = 0}^{\lfloor d - d^\alpha \rfloor} \dfrac{1}{|(R_\alpha)_s|} 
        \leq \dfrac{\lfloor d - d^\alpha \rfloor}{\lfloor d^{2 \alpha - 1} \rfloor}
        \leq \dfrac{d}{d^{2 \alpha - 1} / 2}
        \leq 2 d^{2 - 2\alpha}.
\end{equation}
On the other hand, for distances $s$ in the range $d - d^\alpha < s \leq d - 1$, note that
\begin{equation}
        \left\{s \in \mathbb{Z} : d - d^\alpha < s \leq d - 1 \right\} = \bigcup_{j = 1}^{\lceil d^{2\alpha - 1} \rceil} B_j,
\end{equation}
where 
\begin{equation}
        B_j = \left\{ d - (j - 1) d^{1 - \alpha} - 1,\ \ldots,\ \max\{d - j d^{1 - \alpha},\ d - d^\alpha\} \right\}.
\end{equation}
For each $j$, there are at least $j$ elements $r_2' \in R_\alpha^{(2)}$ such that $r_2' > d - (j - 1)d^{1-\alpha} - 1$, implying $|(R_\alpha)_s| \geq j$ for all $s \in B_j$.
This yields the bound
\begin{equation}
        \label{Eq_ruler_phi_r_alpha_2}
        \sum_{s = \lceil d - d^\alpha \rceil} \dfrac{1}{|(R_\alpha)_s|}^{d - 1}
        = \sum_{j=1}^{\lceil d^{2\alpha -1} \rceil} \sum_{s \in B_j} \frac{1}{|(R_\alpha)_s|}
        \leq \sum_{j=1}^{\lceil d^{2\alpha -1} \rceil} \frac{d^{1-\alpha}}{j}
        \leq d^{1-\alpha} \left( 1 + \log\left( \lceil d^{2\alpha -1} \rceil \right) \right).
\end{equation}
The desired bound in~\eqref{Eq_ruler_phi_r_alpha} follows by combining \eqref{Eq_ruler_phi_r_alpha_1} and \eqref{Eq_ruler_phi_r_alpha_2}.

\subsection{Proof of Theorem~\ref{Theorem_lower_bound}} \label{Proof_lower_bound}
We begin by considering the unquantized setting (i.e., $\Delta = 0$), and derive the lower bound by applying Assouad's Lemma and Le Cam's method.
This is a standard technique for establishing minimax lower bounds in covariance estimation and has been widely adopted in the literature (e.g., \cite{cai2013optimal}).

\subsubsection{Assouad-Type Lower Bound}
This bound has been established in \cite{eldar2020sample}.
To ensure that $|T - \hat{T}|_2 \leq \epsilon$ with probability at least $1/10$, it is necessary that
\begin{equation}
        \label{Eq_assouad_type_bound}
        n \gtrsim \dfrac{d^{3 - 4\alpha}}{\epsilon^2}.
\end{equation}
When $\alpha = 1/2$, the lower bound in \eqref{Eq_assouad_type_bound} nearly matches the upper bound in \eqref{Eq_r_alpha_upper_bound}, up to logarithmic factors.
However, for other values of $\alpha$, a non-negligible gap remains between them.

\subsubsection{Le Cam-Type Lower Bound}
We construct two Toeplitz covariance matrices as
\begin{equation}
        T_0 = I_d, \qquad T_1 = I_d + \eta \cdot \mathrm{Toep}(\boldsymbol{\sigma}),
\end{equation}
where the vector $\boldsymbol{\sigma} = (0, 1, 1, \ldots, 1, 0, 0, \ldots, 0)$ consists of $S = \left\lfloor \dfrac{d}{2} \right\rfloor$ consecutive ones starting from the second entry, followed by zeros.
Both $T_0$ and $T_1$ are symmetric positive semidefinite Toeplitz matrices.
We then define the associated zero-mean Gaussian distributions under subsampling by the ruler $R$,
\begin{equation}
        \mathcal{D}_0 = \mathcal{N}(0, (T_0)_R), \qquad \mathcal{D}_1 = \mathcal{N}(0, (T_1)_R).
\end{equation}
We now apply Le Cam's method to derive a lower bound for covariance estimation. 
First, note that
\begin{equation}
        \|T_0 - T_1\|_2 = \eta \|\mathrm{Toep}(\boldsymbol{\sigma})\|_2 \geq \eta S.
\end{equation}
In addition, we observe that the Frobenius norm between the sub-matrices satisfies
\begin{equation}
        \|(T_0)_R - (T_1)_R\|_F^2 
        = \sum_{s=1}^{S} \eta^2 \cdot |R_s| 
        \lesssim \eta^2 \cdot d \cdot d^{\alpha}
        = \eta^2 d^{\alpha+1}.
\end{equation}
Consequently, the KL divergence between the two distributions admits the bound
\begin{equation}
        d_{KL} (\mathcal{D}_0, \mathcal{D}_1)
        \lesssim \left\|I_d - (T_1)_R^{-1/2} (T_0)_R (T_1)_R^{-1/2}\right\|_F^2
        \asymp \|(T_0)_R - (T_1)_R\|_F^2 
        \lesssim \eta^2 d^{\alpha+1}.
\end{equation}
By Pinsker's inequality, this implies
\begin{equation}
        d_{TV} (\mathcal{D}_0^{\otimes n}, \mathcal{D}_1^{\otimes n})
        \lesssim \sqrt{d_{KL} (\mathcal{D}_0^{\otimes n}, \mathcal{D}_1^{\otimes n})}
        = \sqrt{n \cdot d_{KL} (\mathcal{D}_0, \mathcal{D}_1)}
        \lesssim \sqrt{n \eta^2 d^{\alpha+1}}.
\end{equation}
Set $\eta^2 = \dfrac{1}{4 n d^{\alpha+1}}$. Applying Le Cam's lemma \cite{yu1997assouad}, we obtain the lower bound
\begin{equation}
        \label{Eq_sup_e_error_lower}
        \sup_{\hat{T}} \mathbb{E} \left[ \left\|T - \hat{T}\right\|_2 \right] 
        \geq \dfrac{\eta S}{2} \cdot \left(1 - d_{TV} (\mathcal{D}_0^{\otimes n}, \mathcal{D}_1^{\otimes n})\right)
        \gtrsim \dfrac{d}{\sqrt{n d^{\alpha+1}}}
        = \sqrt{\dfrac{d^{1-\alpha}}{n}}.
\end{equation}
Finally, applying Markov's inequality, we conclude that to ensure $\|T - \hat{T}\|_2 \leq \epsilon$ with probability at least $1/10$, one must have
\begin{equation}
        \label{Eq_le_cam_type_bound}
        n \gtrsim \dfrac{d^{1-\alpha}}{\epsilon^2}.
\end{equation}
The bound in \eqref{Eq_le_cam_type_bound} aligns more closely with the upper bound in \eqref{Eq_r_alpha_upper_bound} when when $\alpha = 1$, up to logarithmic factors.

\subsubsection{Unified Lower Bound}
Combining the two lower bounds derived above, we conclude that to guarantee $\|T - \hat{T}\|_2 \leq \epsilon$, the VSC must satisfy
\begin{equation}
        \label{Eq_union_lower_bound}
        n \gtrsim \dfrac{\max(d^{3-4\alpha}, d^{1-\alpha})}{\epsilon^2},
\end{equation}
which aligns with the upper bound in \eqref{Eq_r_alpha_upper_bound} at both endpoints of the range of $\alpha$.

Although the lower bound in \eqref{Eq_union_lower_bound} is derived under the unquantized setting, it also holds for the more challenging quantized case, thereby completing the proof.

\subsection{Proof of Theorem~\ref{Theorem_psd}} \label{Proof_psd}
Following Theorem~\ref{Theorem_bound_operator_norm} and the sample complexity condition in \eqref{Eq_psd_n}, we have
\begin{equation}
        \label{Eq_psd_bound_error}
        \|T - \hat{T}\|_2 
        \leq C \mathcal{K} \sqrt{\dfrac{\phi(R) \cdot \log(d / \delta)}{n}} 
        \leq \dfrac{\lambda_d(T)}{\sqrt{c_0}}
\end{equation}
with probability at least $1 - \delta$.
Applying Weyl's inequality yields
\begin{equation}
        \lambda_d(\hat{T})
        \geq \lambda_d(T) - \|T - \hat{T}\|_2
        \geq \dfrac{\sqrt{c_0} - 1}{\sqrt{c_0}} \lambda_d(T)
        > 0,
\end{equation}
which implies that $\hat{T}$ is positive definite.
Furthermore, by the sub-multiplicative property of the operator norm, we have
\begin{equation}
        \begin{aligned}
                \|T^{-1} - \hat{T}^{-1}\|_2
                &= \|\hat{T}^{-1} (\hat{T} - T) T^{-1}\|_2
                \leq \|\hat{T}^{-1}\|_2 \|T - \hat{T}\|_2 \|T^{-1}\|_2 \\
                &= \dfrac{1}{\lambda_d(\hat{T})} \|T - \hat{T}\|_2 \dfrac{1}{\lambda_d(T)}
                \leq \dfrac{C \sqrt{c_0}}{\sqrt{c_0} - 1} \cdot \dfrac{\mathcal{K}}{\lambda_d^2(T)} \sqrt{\dfrac{\phi(R) \cdot \log(d / \delta)}{n}}.
        \end{aligned}
\end{equation}
This completes the proof.

\subsection{Proof of Theorem~\ref{Theorem_finite_bit_quant}} \label{Proof_finite_bit_quant}
As mentioned earlier, we first establish that $\hat{T}_k = \hat{T}$ holds with high probability. 
We observe that $\hat{T}_k = \hat{T}$ holds if
\begin{equation}
        \mathcal{Q}_{\Delta} \left( x_{j}^{(l)} + \tau_{j}^{(l)} \right) = \mathcal{Q}_{\Delta, k} \left( x_{j}^{(l)} + \tau_{j}^{(l)} \right), \quad \forall j \in R,\ \forall l = 1, \ldots, n,
\end{equation}
which is guaranteed if
\begin{equation}
        \label{Eq_condition_on_delta_pre}
        2^{k-1} \Delta > \max_{j \in R,\ 1 \leq l \leq n} \left| x_j^{(l)} + \tau_j^{(l)} \right|.
\end{equation}
Since $\tau_j^{(l)}$ is supported on $\left[-\Delta/2, \Delta/2\right]$, the condition \eqref{Eq_condition_on_delta_pre} further implies
\begin{equation}
        \max_{j \in R,\ 1 \leq l \leq n} \left| x_j^{(l)} \right| < \left(2^{k-1} - 1\right) \Delta.
\end{equation}

Note that $\boldsymbol{x}^{(l)} \sim \mathcal{N}(0, T)$.
By applying the Gaussian tail bound (see e.g.,~\cite{vershynin2018high}), we have
\begin{equation}
        \mathbb{P}\left\{ \left|x_j^{(l)}\right| \geq t\right\} 
        \leq 2 \exp{\left( -\dfrac{t^2}{2 T_{jj}} \right)} \leq 2 \exp{\left( -\dfrac{t^2}{2 \|T\|_{\infty}} \right)}.
\end{equation}
Taking a union bound over $j \in R$ and $l = 1, \ldots, n$, we obtain
\begin{equation}
        \label{Eq_finite_bit_union_bound_x}
        \mathbb{P}\left\{ \max_{j\in R, 1 \leq l \leq n} \left|x_j^{(l)}\right| \geq t\right\} 
        \leq 2 n |R| \exp{\left( -\dfrac{t^2}{2 \|T\|_{\infty}} \right)}.
\end{equation}
If we set
\begin{equation}
        \Delta = C_{\mathrm{bit}} \cdot 2^{-k} \sqrt{ \|T\|_{\infty} \log\left( \frac{2n|R|}{\delta'} \right)}
\end{equation}
with $C_{bit} \geq \sqrt{2}$, and choose $t = (2^{k-1} - 1) \Delta$ in \eqref{Eq_finite_bit_union_bound_x}, we have
\begin{equation}
        \label{Eq_finite_bit_bound_t_hat_k}
        \mathbb{P}\left\{ \hat{T}_k \neq \hat{T} \right\} 
        \leq \mathbb{P}\left\{ \max_{j\in R, 1 \leq l \leq n} \left|x_j^{(l)}\right| \geq (2^{k-1} - 1) \Delta\right\} 
        \leq \delta'.
\end{equation}
Combining~\eqref{Eq_finite_bit_bound_t_hat_k} with Theorem~\ref{Theorem_bound_operator_norm}, we conclude that if $n \gtrsim \log (d / \delta)$, it holds with probability at least $1 - \delta - \delta'$ that
\begin{equation}
        \|T - \hat{T}_k\|_2 \leq C \mathcal{K} \sqrt{\dfrac{\phi(R) \cdot \log(d / \delta)}{n}}.
\end{equation}
The proof is thus complete.

\subsection{Proof of Theorem~\ref{Theorem_low_rank}} \label{Proof_low_rank}
For all $j, k \in R_{\alpha}$, it holds that $\left|T_{j,k}\right| \leq \|T_{R_{\alpha}}\|_2$.
Furthermore, the sub-Gaussian norm bound in \eqref{Eq_sub_gaussian_norm_of_x} can be refined as
\begin{equation}
        \|\dot{x}_j^{(l)}\|_{\psi_2} \leq \sqrt{2} \|T_{R_{\alpha}}\|_2^{1/2} + 2\Delta,
\end{equation}
Consequently, applying Lemma~\ref{Lemma_low_rank_bound}, the sub-exponential norm bound in~\eqref{Eq_sub_exp_norm} can be sharpened to
\begin{equation}
        \|\dot{x}_j^{(l)} \dot{x}_k^{(l)}\|_{\psi_1} \leq 4 \|T_{R_{\alpha}}\|_2 + 8 \Delta^2 \leq \dfrac{24 r}{d^{1-\alpha}} \|T\|_2  + 8 \Delta^2.
\end{equation}
Following the derivations in the proof of Theorem~\ref{Theorem_bound_operator_norm} (see Appendix~\ref{Proof_bound_operator_norm}), we obtain
\begin{equation}
        \label{Eq_bound_low_rank_bound_in_proof}
        \mathbb{P}\left\{\|T - \hat{T}\|_2 \geq t \right\} \leq 32 \pi d^2 \exp{\left(-cn \kappa_r \right)},
\end{equation}
where $\kappa_r = \min \left( \dfrac{t^2}{\mathcal{K}_r^2 \phi(R_\alpha)}, \dfrac{t}{\mathcal{K}_r \sqrt{\phi(R_\alpha)}} \right)$ and $\mathcal{K}_r = \dfrac{24 r}{d^{1-\alpha}} \|T\|_2 + 8 \Delta^2$.
Finally, combining \eqref{Eq_bound_low_rank_bound_in_proof} with Lemma~\ref{Lemma_bound_phi_r_alpha}, and setting $t \asymp \mathcal{K}_r \sqrt{\dfrac{\phi(R_\alpha) \cdot \log(d / \delta)}{n}}$, we establish the desired bound \eqref{Eq_low_rank}, which completes the proof.

\subsection{Proof of Theorem~\ref{Theorem_banded}} \label{Proof_banded}
This proof is partially inspired by the techniques presented in \cite{chen2023quantizing}, but incorporates several technical refinements and detailed improvements.
Define the event
\begin{equation}
        \mathcal{B}_s: \quad |a_s - \breve{a}_s| \leq C_3 \min \left(|a_s|, \mathcal{K} \sqrt{\dfrac{\log(d / \delta)}{n}}\right).
\end{equation}
Our objective is to bound $\mathbb{E} \|T - \breve{T}_{\zeta}\|^p$.
Observe that
\begin{equation}
        \begin{aligned}
                \mathbb{E} \|T - \breve{T}_{\zeta}\|^p \leq& \mathbb{E} \left(\sum_{s=0}^{d-1} |a_s - \breve{a}_s| \mathbb{I}(\mathcal{B}_s) + \sum_{s=0}^{d-1} |a_s - \breve{a}_s| \mathbb{I}(\mathcal{B}_s^C) \right)^p \\
                \leq& 2^p \mathbb{E} \left(\sum_{s=0}^{d-1} |a_s - \breve{a}_s| \mathbb{I}(\mathcal{B}_s)\right)^p + 2^p \mathbb{E} \left(\sum_{s=0}^{d-1} |a_s - \breve{a}_s| \mathbb{I}(\mathcal{B}_s^C)\right)^p \\
                :=& 2^p (I_1 + I_2),
        \end{aligned}
\end{equation}
where $I_1$ and $I_2$ correspond to the contributions from the events $\mathcal{B}_s$ and its complement, respectively.
In what follows, we bound $I_1$ and $I_2$ separately.

\subsubsection{Bounding $I_1$}
When the event $\mathcal{B}_s$ occurs (i.e., $\mathbb{I}(\mathcal{B}_s) = 1$), we know that if $a_s = 0$, then $\breve{a}_s = 0$.
Under the bandwidth assumption on $T = \mathrm{Toep}(\boldsymbol{a})$, we have
\begin{equation}
        \label{Eq_banded_bound_i_1}
        I_1 \leq
        \mathbb{E} \left(\sum_{s=0}^{m-1} |a_s - \breve{a}_s| \mathbb{I}(\mathcal{B}_s)\right)^p 
        \leq \left(C_3 m \mathcal{K} \sqrt{\dfrac{\log(d / \delta)}{n}}\right)^p.
\end{equation}

\subsubsection{Bounding $I_2$}
Next, consider the case where $\mathcal{B}_s$ does not hold, i.e., $\mathbb{I}(\mathcal{B}_s^C) = 1$.
In this scenario, observe that
\begin{equation}
        \begin{aligned}
                \left(\sum_{s=0}^{d-1} |a_s - \breve{a}_s| \mathbb{I}(\mathcal{B}_s^C)\right)^p 
                &\leq \left(\sum_{s=0}^{d-1} |a_s - \hat{a}_s| \mathbb{I}(\mathcal{B}_s^C) + \sum_{s=0}^{d-1} |\hat{a}_s - \breve{a}_s| \mathbb{I}(\mathcal{B}_s^C)\right)^p \\
                &= \left(\sum_{s=0}^{d-1} |a_s - \hat{a}_s| \mathbb{I}(\mathcal{B}_s^C) + \sum_{s=0}^{d-1} |\hat{a}_s - \breve{a}_s| \mathbb{I}(|\hat{a}_s| < \zeta) \mathbb{I}(\mathcal{B}_s^C)\right)^p \\
                &\leq (2d)^p \left(\sum_{s=0}^{d-1} |a_s - \hat{a}_s|^p \mathbb{I}(\mathcal{B}_s^C) + \sum_{s=0}^{d-1} |\hat{a}_s|^p \mathbb{I}(|\hat{a}_s| < \zeta) \mathbb{I}(\mathcal{B}_s^C)\right).
        \end{aligned}
\end{equation}
Therefore, the term $I_2$ can be bounded as
\begin{equation}
        \label{Eq_banded_bound_i_2}
        I_2
        \leq (2d)^p \left(\sum_{s=0}^{d-1} \mathbb{E}\left[|a_s - \hat{a}_s|^p \mathbb{I}(\mathcal{B}_s^C)\right]
        + \sum_{s=0}^{d-1} \mathbb{E}\left[|\hat{a}_s|^p \mathbb{I}(|\hat{a}_s| < \zeta) \mathbb{I}(\mathcal{B}_s^C)\right]\right) := (2d)^p (I_{21} + I_{22}).
\end{equation}

\textit{2.1. Bounding $I_{21}$:} By the Cauchy–Schwarz inequality, we have
\begin{equation}
        \mathbb{E}\left[|a_s - \hat{a}_s|^p \mathbb{I}(\mathcal{B}_s^C)\right] 
        \leq \sqrt{\mathbb{E}\left[|a_s - \hat{a}_s|^{2p}\right] \cdot \mathbb{E}[\mathbb{I}(\mathcal{B}_s^C)]}
        = \sqrt{\mathbb{E}\left[|a_s - \hat{a}_s|^{2p}\right] \cdot \mathbb{P}(\mathcal{B}_s^C)}.
\end{equation}
Applying the tail integration formula and using the concentration result in \eqref{Eq_error_of_a_s}, we obtain
\begin{equation}
        \label{Eq_banded_bound_i_21_1}
        \begin{aligned}
                \mathbb{E}\left[|a_s - \hat{a}_s|^{2p}\right]
                &= 2p \int_0^{\infty} t^{2p-1} \mathbb{P}\left\{ \left| a_s - \hat{a}_s \right| \geq t \right\} \text{d} t\\
                &\leq 4p \int_0^{\infty} t^{2p-1} \exp{\left[-cn \min \left( \dfrac{t^2}{\mathcal{K}^2}, \dfrac{t}{\mathcal{K}}\right) \right]} \text{d} t\\
                &= 4p \int_0^{\mathcal{K}} t^{2p-1} \exp{\left(-cn \dfrac{t^2}{\mathcal{K}^2} \right)} \text{d} t + 4p \int_{\mathcal{K}}^{\infty} t^{2p-1} \exp{\left(-cn \dfrac{t}{\mathcal{K}} \right)} \text{d} t\\
                &\leq 4p \int_0^{\infty} t^{2p-1} \exp{\left(-cn \dfrac{t^2}{\mathcal{K}^2} \right)} \text{d} t + 4p \int_{0}^{\infty} t^{2p-1} \exp{\left(-cn \dfrac{t}{\mathcal{K}} \right)} \text{d} t\\
                &\leq \dfrac{2p\mathcal{K}^{2p}}{(cn)^p} \cdot \Gamma (p) + \dfrac{4p\mathcal{K}^{2p}}{(cn)^{2p}} \cdot \Gamma (2p)\\
                &\leq \dfrac{2\mathcal{K}^{2p}}{(cn)^p} \cdot p^p + \dfrac{2\mathcal{K}^{2p}}{(cn)^{2p}} \cdot (2p)^{2p},
        \end{aligned}
\end{equation}
where the last equation holds because $\Gamma(p+1) \leq p^p$.
Combining~\eqref{Eq_banded_bound_i_21_1} with the fact that $\mathbb{P}(\mathcal{B}_s^C) \leq \delta$, we further obtain
\begin{equation}
        \label{Eq_banded_bound_i_21_2}
        \begin{aligned}
                I_{21}
                \leq \sqrt{\left( \dfrac{2\mathcal{K}^{2p}}{(cn)^p} \cdot p^p + \dfrac{2\mathcal{K}^{2p}}{(cn)^{2p}} \cdot (2p)^{2p} \right) \cdot \delta} 
                \leq \sqrt{2 \delta} \cdot \mathcal{K}^p \cdot \sqrt{\dfrac{p^p}{(cn)^p} + \dfrac{(2p)^{2p}}{(cn)^{2p}}}. 
        \end{aligned}
\end{equation}
Set $\delta = d^{-2p}$ and choose $p$ such that
\begin{equation}
        p>1,\quad \text{and}\quad \dfrac{p}{1+2p} \leq \log(d),
\end{equation}
Substituting this into~\eqref{Eq_banded_bound_i_21_2}, we obtain
\begin{equation}
        \label{Eq_banded_bound_i_21}
        I_{21} \leq \sqrt{2} d^{-p} \cdot \left( C_4 \mathcal{K} \sqrt{\dfrac{\log(d / \delta)}{n}} \right)^p
\end{equation}
for some $C_4 > 0$.

\textit{2.2. Bounding $I_{22}$:} Observe that $|\hat{a}_s| < \zeta$ implies $\breve{a}_s = 0$, and the event $\mathcal{B}_s^C$ further implies
\begin{equation}
        \label{Eq_banded_bound_i_22_1}
        |a_s| = |a_s - \breve{a}_s| > C_3 \mathcal{K} \sqrt{\dfrac{\log(d / \delta)}{n}} \geq 2 \zeta 
        > 2 |\hat{a}_s| \geq 2|a_s| - 2|a_s - \hat{a}_s|.
\end{equation}
Therefore,
\begin{equation}
        \dfrac{1}{2} |a_s| < |a_s - \tilde{a}_s|,
\end{equation}
which implies
\begin{equation}
        \label{Eq_banded_bound_i_22_2}
        I_{22} = \mathbb{E}\left[|\hat{a}_s|^p \mathbb{I}(|\hat{a}_s| < \zeta) \mathbb{I}(\mathcal{B}_s^C)\right] 
        \leq \zeta^p \mathbb{P} \left\{|a_s - \hat{a}_s| > \dfrac{1}{2} |a_s|\right\}.
\end{equation}
Since \eqref{Eq_banded_bound_i_22_1} guarantees $|a_s| > 2\zeta$, applying Theorem~\ref{Theorem_bound_infty_norm} gives
\begin{equation}
        \label{Eq_banded_bound_i_22_3}
        \begin{aligned}
                \mathbb{P} \left\{|a_s - \hat{a}_s| > \dfrac{1}{2} |a_s|\right\}
                \leq& 2 \exp \left[ -cn \min \left( \dfrac{|a_s|^2}{4\mathcal{K}^2}, \dfrac{|a_s|}{2\mathcal{K}}\right)\right]
                = 2 \exp \left[ -cn \dfrac{\zeta^2}{\mathcal{K}^2}\right]\\
                \leq &2 \exp \left[-cC_2^2 \log(d / \delta) \right]
                \leq 2 d^{-2p-1},
        \end{aligned}
\end{equation}
where the last inequality holds by choosing $C_2$ such that $c C_2^2 \geq 1$.
Combining \eqref{Eq_banded_bound_i_22_2} and \eqref{Eq_banded_bound_i_22_3}, we obtain
\begin{equation}
        \label{Eq_banded_bound_i_22}
        I_{22} \leq 2 \zeta^p d^{-2p-1} \leq 2 d^{-2p} \left( C_2 \mathcal{K} \sqrt{\dfrac{\log(d / \delta)}{n}} \right)^p.
\end{equation}

With the bounds in \eqref{Eq_banded_bound_i_1}, \eqref{Eq_banded_bound_i_21} and \eqref{Eq_banded_bound_i_22}, we conclude that
\begin{equation}
        \begin{aligned}
                \mathbb{E} \|T - \breve{T}_{\zeta}\|^p 
                &\leq \left(2 C_3 m \mathcal{K} \sqrt{\dfrac{\log(d / \delta)}{n}}\right)^p 
                + \sqrt{2} \cdot \left( 4 C_4 \mathcal{K} \sqrt{\dfrac{\log(d / \delta)}{n}} \right)^p
                + 2 d^{-p} \left( 4 C_2 \mathcal{K} \sqrt{\dfrac{\log(d / \delta)}{n}} \right)^p\\
                &\leq \left(C_5 m \mathcal{K} \sqrt{\dfrac{\log(d / \delta)}{n}}\right)^p,
        \end{aligned}
\end{equation}
for some universal constant $C_5 > 0$.
Applying Markov's inequality yields
\begin{equation}
        \mathbb{P}\left\{\|T - \breve{T}_\zeta \|_2 \geq C_5 e m \mathcal{K} \sqrt{\dfrac{\log(d / \delta)}{n}}\right\}
        \leq \dfrac{\mathbb{E} \|T - \breve{T}_{\zeta}\|_2^p}{\left(C_5 e m \mathcal{K} \sqrt{\frac{\log(d / \delta)}{n}}\right)^p} = e^{-p}.
\end{equation}
This completes the proof.

\bibliographystyle{IEEEtran}
\bibliography{QTCE_reference}

\end{document}